\newtheorem*{remark}{Remark}
\newtheorem{theorem}{Theorem}
\newtheorem{lemma}{Lemma}
\newif\if@restonecol  
\begin{document}

\title{Communication and Computing Resource Optimization for Connected Autonomous Driving}


\author{Kai Xiong, Supeng Leng,~\IEEEmembership{Member,~IEEE},  
Xiaosha Chen, Chongwen Huang, Chau Yuen,~\IEEEmembership{Senior Member,~IEEE} and Yong Liang Guan,~\IEEEmembership{Senior Member,~IEEE}

\thanks{

K. Xiong, S. Leng, and X. Chen are with School of Information and Communication Engineering, University of Electronic Science and Technology of China, Chengdu, 611731, China.}

\thanks{C. Huang and C. Yuen is with Singapore University of Technology and Design (SUTD), Singapore.}

\thanks{Y. L. Guan is with the School of Electrical and Electronic Engineering, Nanyang Technological University, Singapore.}

\thanks{The corresponding author is Supeng Leng, email: spleng@uestc.edu.cn}
}


\maketitle

\begin{abstract}

Transportation system is facing a sharp disruption since the Connected Autonomous Vehicles (CAVs) can free people from driving and provide good driving experience with the aid of Vehicle-to-Vehicle (V2V) communications. Although CAVs bring benefits in terms of driving safety, vehicle string stability, and road traffic throughput, most existing work aims at improving only one of these performance metrics. However, these metrics may be mutually competitive, as they share the same communication and computing resource in a road segment. From the perspective of joint optimizing driving safety, vehicle string stability, and road traffic throughput, there is a big research gap to be filled on the resource management for connected autonomous driving. In this paper, we first explore the joint optimization on driving safety, vehicle string stability, and road traffic throughput by leveraging on the consensus Alternating Directions Method of Multipliers algorithm (ADMM). However, the limited communication bandwidth and on-board processing capacity incur the resource competition in CAVs. We next analyze the multiple tasks competition in the contention based medium access to attain the upper bound delay of V2V-related  application offloading. An efficient sleeping multi-armed bandit tree-based algorithm is proposed to address the resource assignment problem. A series of simulation experiments are carried out to validate the performance of the proposed algorithms.

\end{abstract}

\begin{IEEEkeywords}
Connected Autonomous Vehicles, Vehicle String Stability, Driving Safety, Road Traffic Throughput, Resource Management.

\end{IEEEkeywords}

\IEEEpeerreviewmaketitle

\section{Introduction}

\IEEEPARstart {A}{utonomous} vehicles have been widely regarded as a promising technology to address great challenges in the intelligent transportation systems, such as driving safety, vehicle string stability (related to ride comfort), and road traffic throughput \cite{8320295}. However, the gains of autonomous driving are determined by the accuracy of the on-board sensors (e.g. Radar, camera, GPS) \cite{Kato8584062}. Nevertheless, these on-board sensors are usually costly as the equipment of an individual vehicle, if a vehicle does not equip complete on-board sensors, the inaccurate sensing information and traffic estimation may incur serious accidents. Vehicle-to-Vehicle (V2V) communications compensate for the perceived deficiency of an individual vehicle since the perception region of V2V communication is usually much larger than that of on-board sensors. The on-board processors and information can be shared for many vehicles in a large region. Thus, using shared computing and information, the Connected Autonomous Vehicle (CAV) not be necessarily equipped with complete on-board sensors for the cost reduction \cite{liulin2018}.

Driving safety and road traffic throughput are always the main goal for the driving strategy of CAVs. As the most important aspect of autonomous vehicle systems, driving safety is achieved by maintaining a suitable safety distance. The safety distance is defined as the inter-vehicle spacing, with which a crash can be avoided. Decreasing the inter-vehicle spacing is an effective way to increase road traffic throughput. In this case, platoons formed by vehicles with the same driving speed and direction have the potential to increase road traffic throughput by allowing small inter-vehicle spacing. However, the dense road traffic with the small inter-vehicle spacing can easily incur a back-and-forth velocity oscillation in the platoon, which deteriorates an important platoon metric, vehicle string stability. The string stability is commonly associated with the acceleration frequency and amplitude of vehicles in platoon \cite{Dunbar5876300}. Maintaining vehicle string stability can lead to a good driving safety. Therefore, the transportation management system should account for safe driving, road traffic throughput, and vehicle string stability, simultaneously.



It is known that accelerations of CAVs are controlled by the Cooperative Adaptive Cruise Control (CACC) system, which adaptively changes the vehicle velocity to maintain a suitable inter-vehicle spacing through Vehicle-to-Vehicle (V2V) communication \cite{7349170}. However, the inter-vehicle spacing among CAVs is associated with the V2V bandwidth (communication resource) and on-board processing capacity (computing resource). Sufficient communication and computing resources can shorten the safety distance through reducing the the data transmission delay and the on-board processing delay \cite{8798668}. Unfortunately, due to the random distribution of vehicles, the communication bandwidth in some road segments with low road traffic is under-utilized, while the bandwidth is exhausted in other road segments with road traffic jams. Additionally, the high mobility of vehicles deteriorates the intermittent V2V communication \cite{Qiao044}. Previous work on CACC systems ignored the unbalance and uncertainty of the resources allocated in CAVs, which may result in undesired inter-vehicle spacing to deteriorate road traffic.

This paper focuses on the exploration of the relation between safety distance, string stability, and road traffic throughput. Next, we propose an effective inter-vehicle spacing control scheme to optimize the string stability and road traffic throughput upon safe driving through resource management. The contributions of this paper can be summarized as follows:

\begin{itemize}

\item Based on the continuum road traffic model, we propose a mathematical model of the string stability in an account of safety distance. We further propose an optimal inter-vehicle spacing scheme that jointly optimizes road traffic throughput and string stability upon safe driving. 



\item By leveraging the theory of network calculus, we derive the closed-form of the upper bound of V2V offloading delay for the contention-based medium access control approaches such as IEEE 802.11p. This upper bound can provide a criterion to decide whether a vehicle has deficient communication and computing resource. 




\item According to the upper bound, we design an efficient communication and computing resource management approach that allocates the excess resource from the resource-rich vehicles to the resource-deficient vehicles in order to obtain the desired inter-vehicle spacing among all CAVs. This resource management approach is able to significantly reduce the execution time.


\end{itemize}

The remainder of this paper is organized as follows. Section II reviews the related work. Section III presents the continuum road traffic metrics. Section IV proposes the joint optimization scheme. Section V gives the upper bound of the V2V offloading delay. Section VI presents the resource allocation scheme. Section VII demonstrates simulation results and discussion. Finally, we draw the conclusion in Section VIII. A summary of the important mathematical notations used in the paper is given in Table \ref{tab_1}.

{\color{blue}

\begin{table}[h]
\caption{A summary of important mathematical notations.}
\begin{tabular}{p{1cm}|p{7cm}}
\hline
\hline
Symbol & Description \\
\hline
$s^*$ & The safety distance\\
\hline
$\tau_0$ & The perception-reaction delay\\
\hline
$T_{(ij)k}$ & The upper bound of the V2V offloading delay for the application $k$ from vehicle $i$ to vehicle $j$\\
\hline
$\rho$ & The vehicle density (number of vehicles in an unit area)\\
\hline
$\bm{f}$ & The flux of vehicle string\\
\hline
$\mathscr{F} $ & The road traffic throughput \\
\hline
$R_i$ & The available communication bandwidth of road segment $i$\\
\hline
$\delta$ & The weight of string stability in optimization model\\
\hline
\hline
\end{tabular} \label{tab_1}
\end{table}

}

\section{Related Work}

The advantage of the V2V based cooperative perception enables an individual vehicle to have a longer perception range, which can be used in the cooperative collision detection \cite{Gallego8845107} and lane changing warning. Nunen \textit{et al.} \cite{8317758} proposed an intended acceleration prediction based on V2V perception, which demands sufficient time to ensure high performance and robustness in terms of string stability and driving safety. However, the mathematical relation between string stability and V2V cooperative perception is not identified. Nekoui \textit{et al.} \cite{Nekoui2010Fundamental} demonstrated the V2V communication improves road traffic throughput by reducing driver Perception-Reaction Time (Delay), which implies a high-speed compact platoon. However, this work is not accounted for the impact of the string stability on the vehicle string. The high-speed compact platoons easily result in the vehicle string shockwaves that adversely affects the ride comfort \cite{Dunbar5876300}. Although these work discuss the V2V based perception impacting on road traffic metrics, they ignored to jointly optimize driving safety, string stability, and throughput through resource management.



\begin{figure*}
\centering
\includegraphics[width=.85\linewidth]{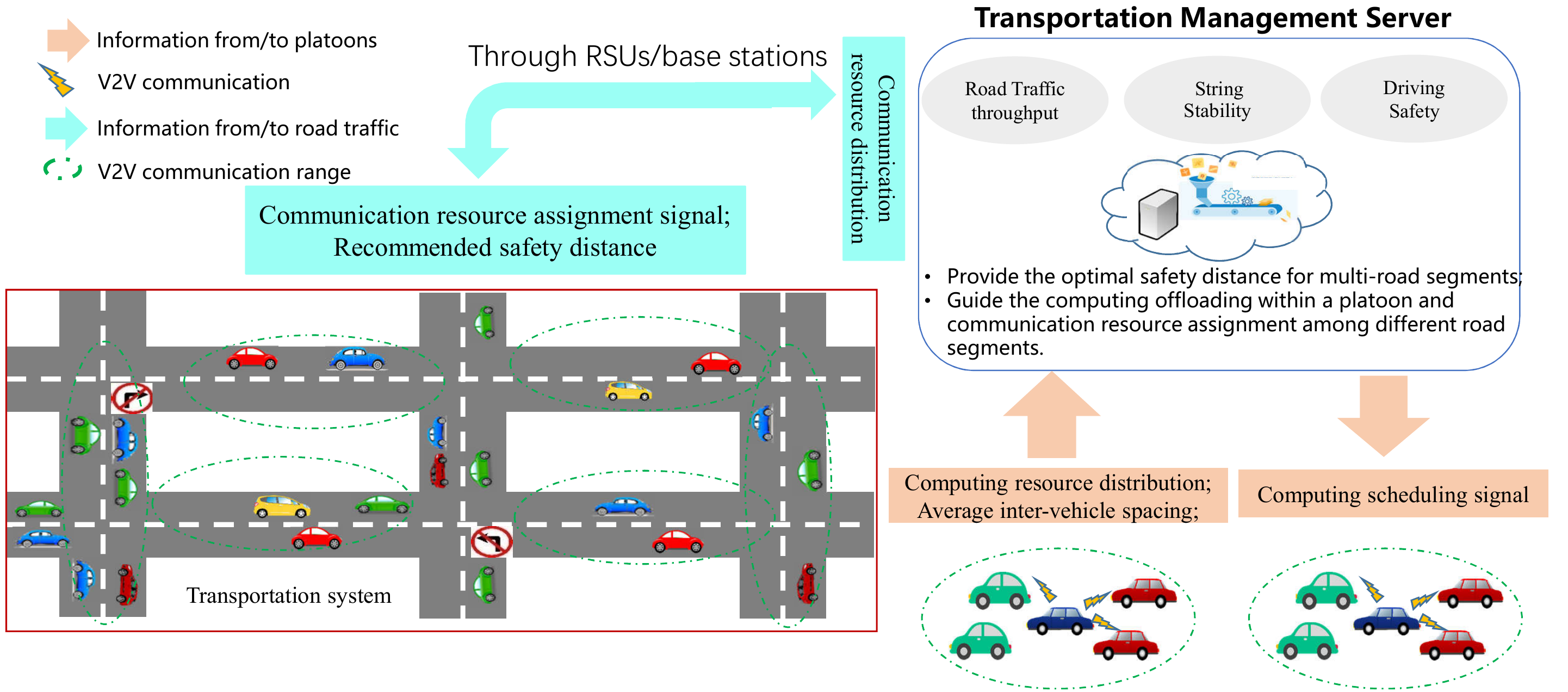} \hfill
\caption{Driving safety, string stability, and road traffic throughput are jointly optimized in multi-road segments. }
\label{gggj}
\end{figure*}
  

In order to realize efficient resource management in the V2V aided CAVs system, the critical work is to establish the quantitative relations between the resources and the road traffic metrics. There are many resource allocation studies \cite{{Qiao573},{Kato8847416}}, which attempt to establish reliable and efficient V2V resource sharing among vehicles. However, most of them did not give any mathematical explanation for the impact of resources on the road traffic. Based on the network calculus theory, Katsaros \textit{et al.} \cite{7277110} analyzed the upper bound of the end-to-end delay for the location-based routing in a hybrid vehicular network. However, they did not consider the multi-tasks scenario, which is impractical in CAVs systems. In addition, the aforementioned work also did not address the resource allocation with intermittent V2V communication due to the high mobility of vehicles. It is still an open challenge to jointly optimize driving safety, vehicle string stability, and road traffic throughput in the high dynamic CAVs system.





\section{System Model}



The system model of joint optimizing multiple road traffic metrics is illustrated in Fig.~\ref{gggj}. The average inter-vehicle spacing of each road segment is reported by platoons to the transportation management servers through the Road Side Units (RSUs) or cellular base stations. Each platoon will declare own computing capability. Additionally, the information about available bandwidth for a road segment is counted by the near RSUs or base stations.

By virtue of our proposed algorithm, the transportation management server provides the optimal safety distance (recommended inter-vehicle spacing) to each road segment to improve the string stability and road traffic throughput upon safe driving. However, optimal safety distance requires sufficient communication and computing resource. Because of the unbalanced resource distribution of the vehicles, the computing resource in some vehicles with powerful on-board processors are non-utilized. However, the resource in other vehicles is exhausted \cite{{Shao119}}. One possible solution to handle the maldistributed resource problem is the platoon-based edge computing, where a platoon of vehicles with sufficient on-board computing resources and communication bandwidth can offer additional mobile edge computing resources cooperatively \cite{{Wang8360847},{Kato8361406}}. Based on our proposed resource assignment algorithm, the transportation management server issues the instruction to implement the edge computing within a platoon and communication reassignment among road segments.

To achieve optimal driving experience, a joint utility of driving safety, string stability, and road traffic throughput should be addressed. Driving safety aims to avoid accidents occurring. The string stability typically is mainly impacted by the vehicle accelerations \cite{Dunbar5876300}. Road traffic throughput is determined by the average velocity and inter-vehicle spacing \cite{Nekoui2010Fundamental}. Moreover, these road metrics are all relevant to the inter-vehicle spacing. Hereafter, we investigate the continuum road traffic model to reveal the relation of inter-vehicle spacing with driving safety, string stability, and traffic throughput.







\subsection{Safety Distance}



The CACC system is used to make the follower keep a proper inter-vehicle spacing from the leader. This proper inter-vehicle spacing is named safety distance $s^*$. If the inter-vehicle spacing is less than the safety distance, the collision accident cannot be avoided \cite{Lian7322289}. To ensure the safe driving, we investigate the rear-end scenario, where the average velocity of vehicles is set to $\bm{v}$. The preceding vehicle starts to brake with the deceleration $A$ at time $t=0$. The follower detects the braking behavior of the preceding vehicle at $t=\tau_{01}$. Hereafter, it starts with the deceleration $A$ to slowdown at $t=\tau_{02} + \tau_{01}$, where $\tau_{01}$ represents the interval from the braking of the preceding vehicle to the perceiving of the follower. This interval depends on the V2V transmission delay. Moreover, $\tau_{02}$ is determined by the on-board processing capacity of the follower. To avoid accidents, the safety distance $s^*$ between two adjacent vehicles caters to Eq.~(\ref{asdskbkl}) at $t=0$,
\begin{equation}
s^*=\frac{A}{2}\tau_0^2+\boldsymbol{v}\tau_0,
\label{asdskbkl} 
\end{equation}

\noindent where $\tau_0 = \tau_{02}+\tau_{01}$ is referred to the perception-reaction delay or time, which is the duration of time from an accident happened to the driver reacts \cite{Nekoui2010Fundamental}. If the inter-vehicle spacing is larger than the safety distance, safe driving can be guaranteed.

\subsection{Vehicle String stability} 

There are many different aspects on stability. In this paper, we focus on the vehicle string stability that is characterized by the amplification of accelerations along the vehicle string. Moreover, stable string stability has the property that the maximal amplification of acceleration goes to 0 with the time approaching infinity \cite{Dunbar5876300}. It can be expressed as
\begin{equation}
\begin{split}
\max _{\mathbf{i}} \sup _{t}\left|\nabla^{2} u_{i}(t)\right|<\alpha \sup _{t}\left|\nabla^{2} u_{1}(t)\right|,
\end{split}
\label{kslflkasjklajlkklllll}
\end{equation} 

\noindent where $u_1(t)$ is the position of the head vehicle. $\nabla^{2} u_{i}(t)$ is the Laplace operator for the $i^{th}$ vehicle position, and $\alpha \in [0,1]$. Furthermore, there is $\nabla^{2} u_{i}(t) = c \nabla f$, where $f$ is the vehicle flux, and $c$ is a parameter that is proportional to $\nabla \rho$.

If Eq.~(\ref{kslflkasjklajlkklllll}) is violated, few accelerations of the vehicle in the vehicle string will result in so-called shockwaves of the vehicle string upon the dense road traffic condition. In the dense road traffic condition, the velocity of vehicles is lower than the speed limit on the road due to the short inter-vehicle spacing. While, in the sparse road traffic condition, the inter-vehicle spacing is large enough to make vehicles attain the speed limit without influencing road safety. The optimal driving strategy of a vehicle in the sparse road traffic condition is trivial: the velocity of the vehicle is equal to the road speed limit. Hence, in this paper, we only concentrate on the dense road traffic. 


In the dense road traffic, CAVs have to accelerate/decelerate continually to adjust the inter-vehicle spacing to approach the safety distance. When the inter-vehicle spacing is over the safety distance $s^*$, the follower will accelerate to narrow the gap that can improve the road traffic throughput. However, when the inter-vehicle spacing is less than the safety distance $s^*$, for safety driving, the follower should decelerate to leave enough inter-vehicle spacing with the preceding vehicle.  Then, we give a concise condition to guarantee the vehicle string stability upon the dense road traffic condition.

\begin{theorem} 
\label{1111xxx}
In the dense road traffic condition, as the safety distance $s^*$ approaching the current average inter-vehicle spacing $\frac{1}{\rho}$ of a vehicle string, i.e., 
\begin{equation}
\begin{split}
\|\frac{1}{\rho}-s^*\| \to 0,
\end{split}
\label{vjjdjdjjjjj}
\end{equation} 

\noindent the vehicle string stability is guaranteed.
\end{theorem}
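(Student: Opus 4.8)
The plan is to reduce the string-stability inequality \eqref{kslflkasjklajlkklllll} to a statement about the density gradient, and then show that this gradient is forced to vanish in the prescribed limit. I would begin from the structural identity stated just above the theorem, $\nabla^{2}u_{i}(t)=c\,\nabla f$ with $c\propto\nabla\rho$. Writing $c=\beta\,\nabla\rho$ for a constant $\beta$, the acceleration of \emph{every} vehicle in the string takes the common form $\nabla^{2}u_{i}=\beta\,\nabla\rho\,\nabla f$. The crucial observation is that the same prefactor $\beta\,\nabla\rho$ multiplies both the head-vehicle term $\nabla^{2}u_{1}$ and each follower term $\nabla^{2}u_{i}$, so the entire acceleration profile along the string is governed by the single scalar $\nabla\rho$.

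Next I would translate the hypothesis $\|\tfrac{1}{\rho}-s^{*}\|\to 0$ from \eqref{vjjdjdjjjjj} into the statement $\nabla\rho\to 0$. In the dense-traffic regime the CACC law continually accelerates or decelerates each follower so as to drive its gap toward the common target $s^{*}=\tfrac{A}{2}\tau_{0}^{2}+\boldsymbol{v}\tau_{0}$ of \eqref{asdskbkl}; once the average spacing $\tfrac{1}{\rho}$ settles onto $s^{*}$, every gap has relaxed to the same equilibrium value and the density field is spatially uniform. I would make this precise by using the continuum fundamental relation $f=\rho\boldsymbol{v}$ together with \eqref{asdskbkl} to fix the equilibrium operating point $\rho=1/s^{*}$, and then bounding $\nabla\rho$ by the spacing error $\tfrac{1}{\rho}-s^{*}$, so that $\|\tfrac{1}{\rho}-s^{*}\|\to0$ forces $\nabla\rho\to0$ and hence $c\to0$.

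Combining the two steps closes the argument: since $\nabla^{2}u_{i}=\beta\,\nabla\rho\,\nabla f$ for all $i$ and $\nabla\rho\to0$, every amplitude $\sup_{t}|\nabla^{2}u_{i}(t)|$ decays to zero as $t\to\infty$. Consequently the maximal amplification on the left of \eqref{kslflkasjklajlkklllll} is dominated by the factor $\alpha<1$ times the (also vanishing) head-vehicle amplitude, which is exactly the definition of string stability adopted here, so \eqref{kslflkasjklajlkklllll} holds in the limit.

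I expect the second step to be the main obstacle. The hypothesis constrains only the \emph{average} spacing $\tfrac{1}{\rho}$, whereas string stability is a property of the spatial \emph{profile}; the substantive work is to argue, from the dense-traffic equilibrium dynamics, that a vanishing mean spacing error actually flattens the density profile so that $\nabla\rho\to0$ rather than merely its average. Once $c\to0$ is established, the collapse of \eqref{kslflkasjklajlkklllll} to a trivially satisfied limit is routine.
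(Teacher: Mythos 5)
Your overall strategy---collapse the stability inequality \eqref{kslflkasjklajlkklllll} via the identity $\nabla^{2}u_{i}=c\,\nabla f$ and show the relevant quantity vanishes in the limit---points in the right direction, but it diverges from the paper at exactly the step you yourself flag as the obstacle, and that step is where your route breaks down. You try to make the \emph{prefactor} $c\propto\nabla\rho$ vanish by arguing that $\|\tfrac{1}{\rho}-s^{*}\|\to 0$ forces the density profile to flatten. No mechanism for this is supplied: the hypothesis constrains only the average spacing, and nothing in the setup bounds $\nabla\rho$ by the mean spacing error. The paper never needs this. Instead it derives the discrete acceleration law $\tfrac{d^{2}u_{n}}{dt^{2}}=-\zeta(2u_{n}-u_{n+1}-u_{n-1})$ from the CACC accelerate/decelerate behaviour, rewrites it as $-2\zeta\bigl(\tfrac{1}{\rho}-s^{*}\bigr)$ using the average-position identity, and substitutes into the continuum conservation equation to obtain
\begin{equation*}
\rho\frac{\partial \boldsymbol{v}}{\partial t}+(\boldsymbol{v}\cdot\nabla)\boldsymbol{f}=2\zeta\Bigl(\frac{1}{\rho}-s^{*}\Bigr),
\end{equation*}
so the spacing error \emph{is} the forcing term and its vanishing directly drives $\nabla\boldsymbol{f}\to 0$. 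The factor you want to kill ($c$) is the one the paper leaves alone; the factor the paper kills ($\nabla f$) is the one your argument never touches. To repair your proof you would have to either supply the missing bound on $\nabla\rho$ (which is the hard, unaddressed part) or adopt the paper's dynamical derivation.

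A second, smaller gap: in your closing step you let \emph{both} sides of \eqref{kslflkasjklajlkklllll} decay to zero and claim the strict inequality is then ``trivially satisfied.'' It is not---$0<\alpha\cdot 0$ is false. The paper avoids this by asserting that $\sup_{t}|\nabla^{2}u_{1}(t)|$ remains strictly positive while the left-hand side tends to $0$, so the strict inequality holds in the limit. Your version, in which the head-vehicle amplitude also vanishes, leaves the conclusion unestablished.
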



%
%

\begin{proof}

Here, $\rho$ represents the vehicle density. We use $u_n(t)$ to denote the position of the $n^{th}$ vehicle in the vehicle string at the time $t$. Investigating the $n^{th}$ and $(n+1)^{th}$ vehicles of a vehicle string, if $(u_{n+1}-u_{n})>s^*$, the acceleration of $n^{th}$ vehicle is proportional to the gap $(u_{n+1}-u_{n})$. While $(u_{n+1}-u_{n})<s^*$, the deceleration of $n^{th}$ vehicle is inversely proportional to the gap $(u_{n+1}-u_{n})$. The behaviours of the $n^{th}$ and $(n-1)^{th}$ vehicles are same with that of $n^{th}$ and $(n+1)^{th}$ vehicles in the vehicle string. Moreover, the above statements in can be summarized as, 
\begin{equation}
\begin{split}
\frac{d^2 u_n}{d t^2} =-\zeta(2u_n-u_{n+1}-u_{n-1}),
\end{split}
\label{jhklgzxkhjvk}
\end{equation}

\noindent where $\zeta$ is a scale factor. $\frac{d^2 u_n}{d t^2}$ is the acceleration of the $n^{th}$ vehicle. Furthermore, the average position of the $n^{th}$ vehicle is written as $\left \langle  u(t)_n \right \rangle=\frac{1}{2}[u(t)_{n+1}+u(t)_{n-1}]$. Hence, we get 
\begin{equation}
\begin{split}
\frac{d^{2} u_{n}}{d t^{2}} =-2 \zeta\left(u_{n}-\left\langle u_{n}\right\rangle\right)=-2 \zeta\left(\frac{1}{\rho}-s^*\right).
\end{split}
\label{uhasdgfliaefphiua}
\end{equation} 


\noindent In addition, the basic conservation equation of the continuum road traffic model is given as \cite{Tosin2009},  
\begin{equation}
\frac{\partial \boldsymbol{v}}{\partial t}+(\boldsymbol{v} \cdot \nabla) \boldsymbol{v}= \frac{ d \boldsymbol{v}}{\rho dt},\label{dsafjh} 
\end{equation}

\noindent where $\boldsymbol{v}$ is the average velocity of vehicles on road. $t$ is the time component. According to $\frac{d^{2} u_{n}}{d t^{2}} = \frac{d \boldsymbol{v}}{dt}$, substituting Eq.~(\ref{uhasdgfliaefphiua}) to Eq.~(\ref{dsafjh}) and multiplying $\rho$ in the both sides, we obtain, 
\begin{equation}
\begin{split}
\rho\frac{\partial \boldsymbol{v}}{\partial t}+(\boldsymbol{v} \cdot \nabla) \boldsymbol{f}= 2 \zeta \left(\frac{1}{\rho}-s^*\right),
\end{split}
\label{sfhh}
\end{equation}     

\noindent in which $\boldsymbol{f}=\boldsymbol{v} \times \rho$ is the flux of the vehicle string. $\nabla \boldsymbol{f}$ represents the spatial gradient of the flux. If $\nabla \boldsymbol{f}$ is small, the spatial differentiation of $\boldsymbol{v}$ and $\rho$ will be constrained. Then, road traffic flow through different positions will become smooth that amplifies the driving experience. Similarly, $\frac{\partial \boldsymbol{v}}{\partial t}$ reflects the temporal differentiation of the vehicle velocity. The larger $\frac{\partial \boldsymbol{v}}{\partial t}$ indicates the heavy fluctuations of the vehicle string. Consequentially, the optimal string stability is attained when $\frac{\partial \boldsymbol{v}}{\partial t}$ and $\nabla \boldsymbol{f}$ are both equal to $0$. it results in 
\begin{equation}
\begin{split}
\|\frac{1}{\rho}-s^*\| = 0.
\end{split}
\label{adsgngfyti}
\end{equation}

\noindent Conversely, when the inter-vehicle spacing approaches the safety distance, i.e., $\|\frac{1}{\rho}-s^*\| \rightarrow 0$, the gradient of the vehicles flux $\nabla \boldsymbol{f} \rightarrow 0$. Furthermore, due to $\nabla^{2} u_{1}(t) = c \nabla f$, it implies $\max _{\mathbf{i}} \sup _{t}\left|\nabla^{2} u_{i}(t)\right| = \left|c \nabla f\right| \rightarrow 0$. 
Additionally, the flux of the preceding vehicle $\nabla^{2} u_{1}(t)$ is always positive. Therefore, the inequality $\max _{\mathbf{i}} \sup _{t}\left|\nabla^{2} u_{i}(t)\right|<\alpha \sup _{t}\left|\nabla^{2} u_{1}(t)\right|$ holds. The vehicle string stability is guaranteed.

\end{proof}

\subsection{Road Traffic Throughput}

The road traffic throughput $\mathscr{F}$ is defined as the scalar form of the vehicle string flux $\bm{f}$. Therefore, 
\begin{equation}
\begin{aligned}
\begin{split}
\mathscr{F} =  v \cdot \rho,
\end{split} 
\end{aligned} 
\label{vvvkjhvks}
\end{equation}

\noindent where the density $\rho$ is inversely proportional to the average inter-vehicle spacing. Thus, shorten the average inter-vehicle spacing that can increase road traffic throughput. Besides, due to applying the CACC system, the inter-vehicle spacing always surrounds the safety distance. Therefore, we can convert the maximization of road traffic throughput to minimize the safety distance $s^*$. However, the optimization of vehicle string stability aims to narrow the gap between safety distance and the average inter-vehicle spacing $\|\frac{1}{\rho} - s^*\|$. Consequently, transportation management cannot only optimize road traffic throughput but consider the impact on vehicle string stability.

\begin{remark}
In a dense road traffic condition, if the safety distance is less than the current average inter-vehicle spacing, i.e., $s^* < \frac{1}{\rho}$, improving road traffic throughput will deteriorate the vehicle string stability. 
\end{remark}


\section{Joint Optimization of Driving Safety, String Stability, and Road Traffic Throughput}

In this section, we propose an optimization model that jointly optimizes driving safety, string stability, and road traffic throughput in multi-road segments. Note that guaranteeing the driving safety is the prerequisite of optimizing string stability and road traffic throughput. To drive safety, the following vehicles should maintain the safety distance with the preceding vehicle in the vehicle string. In \cite{8644035}, the safety distance of road segment $i$ is determined by communication bandwidth $R_i$ and on-board computing capacity $\theta^i_e$ of vehicle $e$, where the computing capacity $\theta^i_e$ represents the CPU cycles of vehicle $e$ \cite{Kato8361406}. We assume vehicles in the same road segment competing with each other for the common V2V bandwidth. And, there are $M$ road segments in the transportation system. To achieve the optimal road traffic throughput and string stability upon driving safety, the optimization model is proposed as
\begin{equation}
\begin{aligned}
\begin{split}
 \textbf{P1:} \qquad &\min\limits_{R_i,\bm{\theta^i}} \ {\sum_{i}^{M} \left({s^*_i(R_i,\bm{\theta^i})}  + \delta \|s^*_i(R_i,\bm{\theta^i})-\frac{1}{\rho_i}\| \right) }\\
 \text{s.t.} \ \ \ \ \ 
&\text{C1:} \ \ s^*_i(R_i,\bm{\theta_i}) \ge 0, \ \ \text{C2:} \ \ 0 \le \sum_i R_i \le R^{upper}, \\
&\text{C3:} \ \ 0 \le \theta^i_e \le \theta_e^{upper},
\end{split} 
\end{aligned} 
\label{sdffff}
\end{equation}

\noindent where $s^*_i$ represents the safe distance in road segment $i$; $\bm{\theta^i} = \{\theta^i_1, \dots , \theta^i_g\}$ is the set of on-board computing resource offered by vehicles in road segment $i$. The vehicle density of the road segment $i$ is $\rho_i$; $\delta$ is a coefficient to balance road traffic throughput and string stability. $R^{upper}$ is the total communication resource of the transportation system. $\theta_e^{upper}$ is the available on-board computing capacity for vehicle $e$.

It is difficult to solve the $P1$ directly since $s^*_i(R_i,\bm{\theta^i})$ is nonlinear with $R_i$ and $\theta^i$. Therefore, to efficiently solve $P1$, we decompose $P1$ into two sub-problems in this paper. In the first part, we convert $P1$ into $P2$ that omits the communication and computing resources constraints on CAVs (constraint C2 and C3). The optimal safety distance of $P2$ jointly optimizes road traffic throughput and string stability without the resources constraints. Hereafter, the second part is designed to propose a resource management that allocates the communication and computing resource to meet the demands of the optimal safety distance. Moreover, the resource management caters to $C2$ and $C3$ constraints.

\begin{equation}
\begin{aligned}
\begin{split}
 \textbf{P2:} \qquad &\min\limits_{R_i,\bm{\theta^i}} \ {\sum_{i}^{M} \left({s^*_i }  + \delta \|s^*_i -\frac{1}{\rho_i}\| \right) }\\
 \text{s.t.} \ \ \ \ \ 
&\text{C4:} \ \ s^*_i  \ge 0  .
\end{split} 
\end{aligned} 
\label{sdf}
\end{equation}

\noindent In the first part, we introduce the intermediate constraint $s^* - z = 0$ to transform $P2$ to the general form to apply the consensus Alternating Directions Method of Multipliers algorithm (ADMM) \cite{BoydS2010}. One benefit of the consensus ADMM is to make the safety distance of different road segments identical by iterations, which gradually eliminates the traffic fluctuations when vehicles alter into another road segment. In addition, the monotonicity of $\|s^*\|^2_2$ is same with that of $s^*$. Thus, we can replace $s^*$ with $\frac{1}{2}\|s^*\|^2_2$ in $P3$ to transform $P2$ to the standard LASSO form \cite{{BoydS2010}}.

\begin{equation}
\begin{aligned}
\begin{split}
 \textbf{P3:} \qquad &\min\limits_{\bm{s^*}} \ { \frac{1}{2}\sum_{i}^{M} \left\|s^*_i \right\|_2^2     + \delta \|z - \frac{1}{\rho_i}\| }\\
 \text{s.t.} \ \ \ \ \ 
&\text{C5:} \ \ s^*_i - z = 0.
\end{split} 
\end{aligned} 
\label{gfg}
\end{equation}


\noindent  Since the minimum $ { \frac{1}{2}\sum_{i}^{M} \left\|s^*_i \right\|_2^2 + \delta \|z - \frac{1}{\rho_i}\| }$ results in a non-negative $s^*$, the constraint $C4$ is unnecessary in $P3$. Moreover, the augmented Lagrangian of $P3$ is\begin{equation}
\begin{aligned}
\begin{split}
\begin{array}{l}{\mathcal{L}_{\mu}\left({s^*}_{1}, \ldots, {s^*}_{M}, \bm{y}, z\right)} \\ {=\sum\limits_{i=1}^{M} \frac{1}{2}\left\|s^*_i \right\|_2^2 + {y}_{i} ^{T}\left(  s^*_i-z\right)+\frac{\mu}{2}\left\| s^*_i-z\right\|_{2}^{2}} + \delta \|z-\frac{1}{\rho_i}\| \end{array},
\end{split} 
\end{aligned} 
\label{sfddkjnh}
\end{equation}

\noindent where $\mu \in R_{+}$ is a positive penalty parameter in the augmented Lagrangian \cite{Zhou8667693}. $\bm{y}$ is the vector of Lagrange multipliers. The iterations of updating $s^*_i$, global variable $z$, and Lagrange multiplier $\xi_i$ are given as
\begin{equation}
\begin{aligned}
\begin{split}
\begin{aligned} {s^*_i}^{k+1} & :=\left(1+\mu\right)^{-1} \mu \left(z^k-\xi_i^{k} - \mathbb{E}\left\{ \boldsymbol{\frac{1}{\rho}} \right\} \right) \\ \
z^{k+1} & :=S_{\delta / \mu}\left(\mathbb{E}\left\{\boldsymbol{{s^*}}^{k+1}+\boldsymbol{\xi}^{k}\right\}\right) + \mathbb{E}\left\{ \boldsymbol{\frac{1}{\rho}} \right\} \\ \
\xi_i^{k+1} & :=\xi_i^{k}+{s^*_i}^{k+1}-z^{k+1}, \end{aligned}
\end{split} 
\end{aligned} 
\label{vvvvvv}
\end{equation}

\noindent where $\xi_i  =  \frac{y_i}{\mu}$. $\mathbb{E}\{a\}$ is the expectation of $a$. $\boldsymbol{s^*}=\{s^*_1,\dots, s^*_M\}$ is the set of safety distance in different road segments. $\boldsymbol{\xi}$ represents the set of $\{\xi_1,\dots , \xi_M\}$. $\boldsymbol{\frac{1}{\rho}} =\{\frac{1}{\rho_1}, \dots , \frac{1}{\rho_M} \}$. $S_{{{\delta}/{\mu}}}(a)$ is the soft threshold operator that is given by \cite{BoydS2010}
\begin{equation}
\begin{aligned}
\begin{split}
S_{{{\delta}/{\mu}}}(a)=\left\{\begin{array}{ll}{a-{{\delta}/{\mu}}} & {a>{{\delta}/{\mu}}} \\ {0} & {|a| \leq {{\delta}/{\mu}}} \\ {a+{{\delta}/{\mu}}} & {a<-{{\delta}/{\mu}}.}\end{array}\right. 
\end{split} 
\end{aligned} 
\label{vvvdfsfsdh}
\end{equation}

\noindent In addition, the stopping criterion of the iteration is
\begin{equation}
\begin{aligned}
\begin{split}
\left\|{r}^{k}\right\|_{2}^{2} &=\sum_{i=1}^{M}\left\|{s^*_{i}}^{k}-z^{k}\right\|_{2}^{2}\le \epsilon^{\text { prim }} 
\\ \qquad\left\|{Dr}^{k}\right\|_{2}^{2} &=M \mu^{2}\left\|{z}^{k}-{z}^{k-1}\right\|_{2}^{2} \le \epsilon^{\text { dual }},
\end{split} 
\end{aligned} 
\label{fwgljhewrglkj}
\end{equation}

\noindent where ${r}^{k}$ and ${Dr}^{k}$ represent the primal residual and the dual residual, respectively \cite{BoydS2010}. $\epsilon^{\text { prim }}$ and $\epsilon^{\text { dual }}$ are the constant thresholds. Based on the above analysis, the multi-road segment joint optimization scheme is summarized as Alg.~{\ref{Ag1}}.


\begin{algorithm}  
  \caption{ADMM-based joint optimization scheme}  \label{Ag1}
 
  $\bf{input:}$ $\boldsymbol{\frac{1}{\rho}}$, $z \leftarrow 1$, $\xi_i \leftarrow \frac{y_i}{\mu} $  \;  

  \While{ $\left\|{Dr}^{k}\right\|_{2}^{2} \leq \epsilon^{\text { dual }}$ and  $\left\|r^k_{i}\right\|_{2}^{2} \leq \epsilon^{\text { prim }}$}   
  {   

      Update ${s^*_i}^{k+1} \leftarrow \left(1+\mu\right)^{-1} \mu \left(z^k-\xi_i^{k} - \mathbb{E}\left\{ \boldsymbol{\frac{1}{\rho}} \right\}\right)$ \;
          
      Update $z^{k+1} \leftarrow S_{\delta / \mu}\left(\mathbb{E}\left\{\boldsymbol{{s^*}}^{k+1}+\boldsymbol{\xi}^{k}\right\}\right) + \mathbb{E}\left\{ \boldsymbol{\frac{1}{\rho}} \right\}$ \;

      Update $\xi_i^{k+1} = \xi_i^{k}+{s^*_i}^{k+1}-z^{k+1}$ \;
      Update $r_i^{k+1} = {s^*_i}^{k+1} - z^{k+1}$ \;   
    
  }    
  $\bf{output:}$ $\boldsymbol{s^*}$ 
\end{algorithm}

\section{Upper Bound of V2V Offloading Delay}

To take account of constraints $C2$ and $C3$, we need to figure out the demands of communication and computing resource to support the optimal safety distance. In addition, a typical task offloading paradigm includes the communication process and the computing process. Therefore, the V2V task offloading performance can be used to
determine the demands of communication and computing resource in the CACC system. However, most previous works mainly concentrated on the access delay \cite{{5967982}}. As for edge computing-based applications, the computing/processing delay is also a dominant factor to affect the quality of services. Hence, we take the account of the communication delay and the computing delay in the delay performance analysis.

In addition, this paper focuses on the upper bound delay performance rather than the average delay performance. If we adjust the upper bound delay of V2V offloading smaller than the delay requirement of the application, the delay requirement of the application can be guaranteed. However, the average delay is an average time consuming of the V2V offloading. When the variance of the offloading delay becomes large, the average delay cannot give any promise to complete the application offloading in time. The upper bound delay of vehicular communication have been widely studied \cite{5967982}. However, the previous work did not consider the transmission collisions. An advantage of the network calculus (NC) is that it can easily obtain the end-to-end upper bound delay in the competitive concatenated system. Thus, in this paper, we apply the NC model to obtain the end-to-end delay for the V2V application offloading.

In the NC theory, an arrival process $A(s,t) = A(t) - A(s)$ represents the cumulative number of the input network traffic of a vehicle in the time interval $(0,t]$. There are $K$ categories of driving assistance applications $\{1,\dots,K\}$. The data volume of the CACC application is denoted by $o_k$ and the average arrival rate is $\lambda_k$ \cite{Kato7636965}. According to the $(\epsilon, \sigma)$-upper constraint, we have $A(t)-A(s) \le \lambda_k(t-s)+o_k$ \cite{Jiang2008Stochastic}. Thus, the arrival curve of CACC application in vehicle $i$ is $\alpha_{i,k}(t)=\lambda_k\cdot t+o_k$.

Besides, the total volume of network traffic should not exceed the communication capacity, i.e., $N\sum_{i=1}^{K} \lambda_i o_i\leq R_j$. $N$ is the number of CAVs in road segment $j$. $R_j$ is the communication bandwidth of road segment $j$. Based on the optimal safety distance obtained from Alg.~{\ref{Ag1}}, we get $N = \frac{2\iota L}{s^*}$, where $\iota$ is the number of lanes in the road segment. $L$ is the V2V radio range. Hereafter, the upper bound delay of V2V offloading is given as follow

\begin{theorem}

\label{asdkjgsl}
The upper bound delay of V2V offloading for vehicle $i$ with application $k$ in road segment $j$ is
\begin{equation}
\begin{split}
T_{(ij)k} = \frac{o_k\eta_k}{\theta_i} + \frac{o_k}{R_j - H_{\lambda}}+ \frac{\Lambda H_{\lambda}+ H_o}{R_j - H_\lambda}+\Lambda .
\end{split}
\label{dfhyssssssssssssssssssssssssjbh}
\end{equation}

\noindent where $\Lambda = \left(2^{\varepsilon+1}-1+2^{\varepsilon}(\gamma-\varepsilon)\right)W_0$ is the protocol-related part; $\frac{o_k\eta_k}{\theta_i}$ is the computing delay; $\frac{o_k}{R_j - H_{\lambda}}$ represents the transmission delay; $\frac{\Lambda H_{\lambda}+ H_o}{R_j - H_\lambda}$ is regarded as the competition delay, in which $H_{\lambda}=N\sum_{l\neq k}^{K}\lambda_l+(N-1)\lambda_k$ and $H_{o}=N\sum_{l\neq k}^{K} o_l+(N-1)o_k$.

\end{theorem}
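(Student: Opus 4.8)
The plan is to treat the V2V offloading path as a tandem of two network-calculus servers---an on-board computing server followed by the contention-based communication channel---and to bound the delay of the tagged flow (vehicle $i$, application $k$) by the maximum horizontal deviation between its arrival curve and the service curve it actually receives. The arrival curve of the tagged flow is already fixed in the text as $\alpha_{i,k}(t)=\lambda_k t+o_k$. First I would aggregate all traffic competing for segment $j$: every other application $l\neq k$ on each of the $N$ vehicles, together with application $k$ on the remaining $N-1$ vehicles. Summing their affine arrival curves gives a cross-traffic arrival curve $\alpha_c(t)=H_\lambda t+H_o$ with exactly the stated rate $H_\lambda=N\sum_{l\neq k}^{K}\lambda_l+(N-1)\lambda_k$ and burst $H_o=N\sum_{l\neq k}^{K}o_l+(N-1)o_k$, where $N=2\iota L/s^*$ enters the count.

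Next I would model the IEEE 802.11p channel as a strict rate-latency server $\beta(t)=R_j\,(t-\Lambda)^+$, whose rate is the segment bandwidth $R_j$ and whose latency $\Lambda$ is the worst-case protocol overhead. The value of $\Lambda$ follows from summing the maximum backoff windows over all retransmission stages: the contention window doubles from $W_0$ up to the saturation stage $\varepsilon$, contributing $\sum_{i=0}^{\varepsilon}2^{i}W_0=(2^{\varepsilon+1}-1)W_0$, and then stays clamped at $2^{\varepsilon}W_0$ for the remaining $\gamma-\varepsilon$ stages, contributing $2^{\varepsilon}(\gamma-\varepsilon)W_0$; adding the two pieces gives $\Lambda=\bigl(2^{\varepsilon+1}-1+2^{\varepsilon}(\gamma-\varepsilon)\bigr)W_0$. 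With the server strict, I would then invoke the leftover (blind-multiplexing) service theorem to obtain the residual curve seen by the tagged flow, $\beta_{\mathrm{lo}}=(\beta-\alpha_c)^+$, which is again rate-latency with residual rate $R_j-H_\lambda$ and residual latency $T'=\dfrac{R_j\Lambda+H_o}{R_j-H_\lambda}$.

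The communication delay is then the horizontal deviation $h(\alpha_{i,k},\beta_{\mathrm{lo}})=T'+\dfrac{o_k}{R_j-H_\lambda}$. A one-line algebraic rearrangement splits the residual latency as $T'=\Lambda+\dfrac{\Lambda H_\lambda+H_o}{R_j-H_\lambda}$, reproducing the protocol term $\Lambda$ and the competition term $\frac{\Lambda H_\lambda+H_o}{R_j-H_\lambda}$ of the statement, while the burst-over-rate part $\frac{o_k}{R_j-H_\lambda}$ is the transmission term. Finally, the computing stage is a rate server of capacity $\theta_i/\eta_k$ bits per second (with $\eta_k$ the cycles-per-bit load and $\theta_i$ the CPU-cycle rate), so its delay for the tagged flow is $o_k\eta_k/\theta_i$; concatenating it with the channel (min-plus convolution of the two service curves) and bounding the end-to-end delay by the sum of stage delays yields $T_{(ij)k}=\frac{o_k\eta_k}{\theta_i}+\frac{o_k}{R_j-H_\lambda}+\frac{\Lambda H_\lambda+H_o}{R_j-H_\lambda}+\Lambda$.

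The main obstacle I expect is the second step---establishing that the contention-based MAC genuinely offers a \emph{strict} rate-latency service curve with latency $\Lambda$, since the leftover-service theorem requires strictness and the backoff process is stochastic; pinning the worst case down to the saturated-window sum, and arguing that the channel rate $R_j$ is conserved under the collision/backoff dynamics, is the delicate part. By comparison, aggregating the arrival curves, applying the leftover bound, the algebraic regrouping of $T'$, and attaching the deterministic computing delay are all routine.
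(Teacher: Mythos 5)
Your proposal follows essentially the same route as the paper: affine arrival curve for the tagged flow, superposition of the cross-traffic into $H_\lambda t+H_o$, a rate-latency service curve $R_j(t-\Lambda)^+$ with $\Lambda$ obtained from the saturated backoff-window sum, the leftover-service theorem to get the residual curve of rate $R_j-H_\lambda$, min-plus concatenation with the computing server, and the horizontal-deviation delay bound, with the same algebraic regrouping of the residual latency into $\Lambda+\frac{\Lambda H_\lambda+H_o}{R_j-H_\lambda}$. The delicate point you flag (whether the contention-based MAC really provides a strict rate-latency curve with latency $\Lambda$, as the leftover theorem requires) is indeed the weakest link, and the paper simply asserts it rather than proving it.
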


\begin{proof}

The competitive V2V communication applies the contention-based medium access control approaches such as the IEEE 802.11p standard, which resorts to the exponential back-off algorithm. This paper assumes the exponential back-off process has $\gamma$ back-off states and the initial size of the back-off window is $W_0$. Therefore, the size of the window in the back-off state $g$ is expressed as $W_g=\min\left\{2^{{g}} W_0,2^{\varepsilon}W_0\right\}$, where $\varepsilon$ is a threshold to limit the increase of the counter, $0 < \varepsilon \le \gamma$. If the back-off counter exceeds $\varepsilon$, the size of the back-off window will not grow anymore. Therefore, the maximum waiting time for the V2V access is
\begin{equation}
\begin{split}
\sum_{g=0}^\gamma W_g &=\left(2^{\varepsilon+1}-1+2^{\varepsilon}(\gamma-\varepsilon)\right)W_0.
\end{split}
\label{54186122}
\end{equation}

According to the superposition property \cite{Jiang2008Stochastic}, the whole arrival curves except for the CACC application traffic of vehicle $i$ is regarded as a superposition curve $\alpha_{ik}(t)$:
\begin{equation}
\begin{split}
&\hat{\alpha}_{ik}(t)=\sum_{j\neq i}^N\sum_{l\neq k}^{K}\alpha_{j,l}(t)\\&=
\left(N\sum_{l\neq k}^{K}\lambda_l+(N-1)\lambda_k\right)\cdot t+N\sum_{l\neq k}^{K} o_l+(N-1)o_k.
\end{split}
\label{dmhgkf;lyi}
\end{equation}


The transmission capability of the competitive V2V channel is constrained by the classical latency-rate service curve $\beta(t)=R(t-x)^+$ \cite{Jiang2008Stochastic}, in which $x$ is the service delay of the V2V channel. Hence, we get
\begin{equation}
\begin{split}
\beta(t)=R_j \left(t-\left(2^{\varepsilon+1}-1+2^{\varepsilon}(\gamma-\varepsilon)\right)W_0\right)^+,
\end{split}
\label{sdg;kj}
\end{equation}

\noindent where $(x)^+=\max\{x,0\}$. Next, according to the theory of Leftover Service \cite{Jiang2008Stochastic}, we obtain the service curve of the V2V transmission to serve CACC application $k$ for vehicle $i$
\begin{equation}
\begin{aligned}
\begin{split}
\beta_{i,k}(t)&=(\beta-\hat{\alpha}_{ik})^+(t)\\
&= (R_j - H_{\lambda}) \left[ t - \left( \frac{\Lambda   H_{\lambda}+H_o}{R_j - H_{\lambda}}+\Lambda   \right)\right]^+,
\end{split} 
\end{aligned}
 \label{rty;lyi} 
\end{equation}

\begin{figure}
\centering
\subfloat[Upper bound delay of the different communication and computing capacities in a platoon with 3 vehicles.]{\label{skskskskskkkks}{\includegraphics[width=0.48\linewidth]{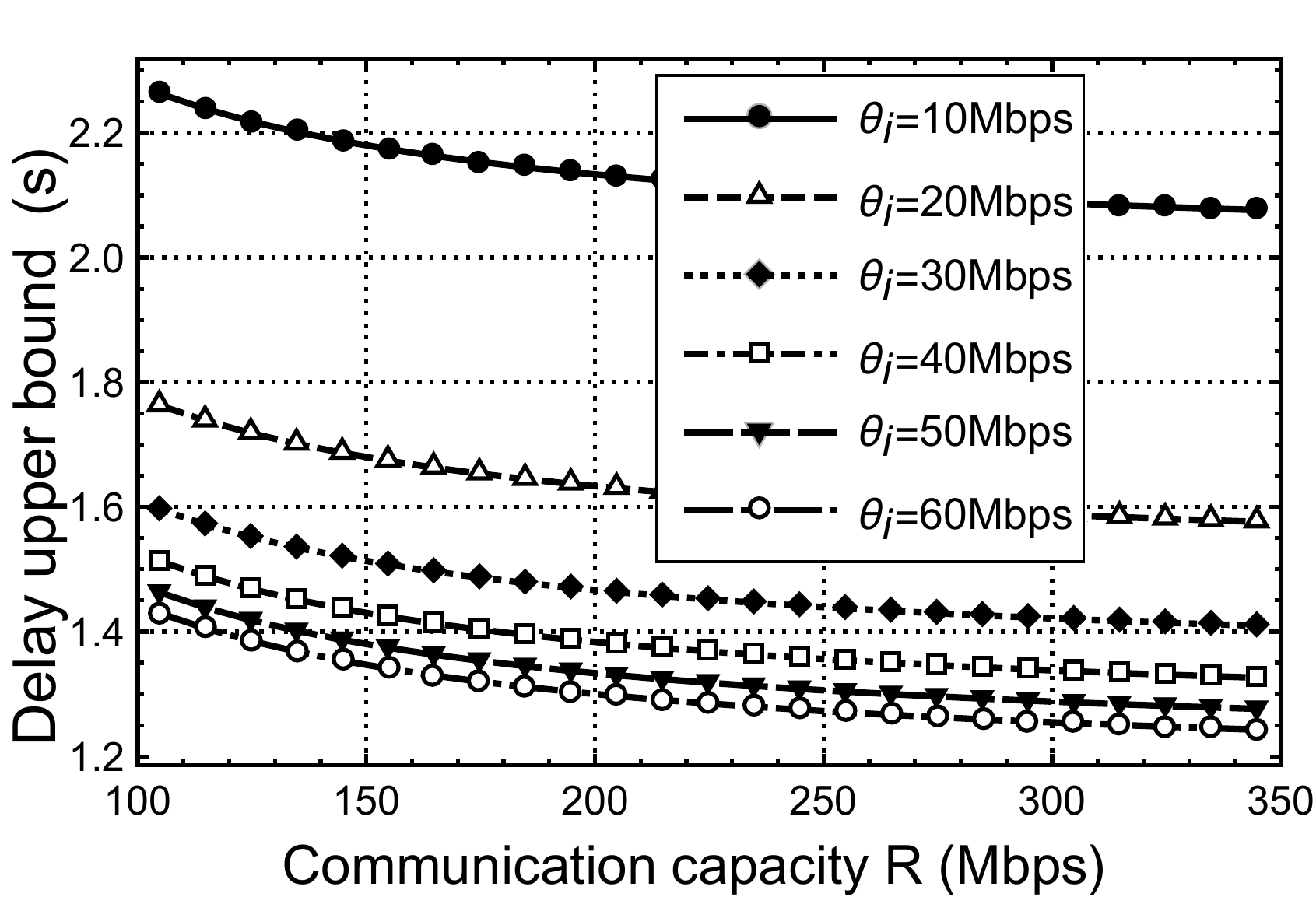}}}\hfill
\subfloat[Upper bound delay of the different number of vehicles, where the computing capacity of each vehicle is 60 Mbps.]{\label{jhadsjkdskjjkhb_nmsajd}
{\includegraphics[width=0.48\linewidth]{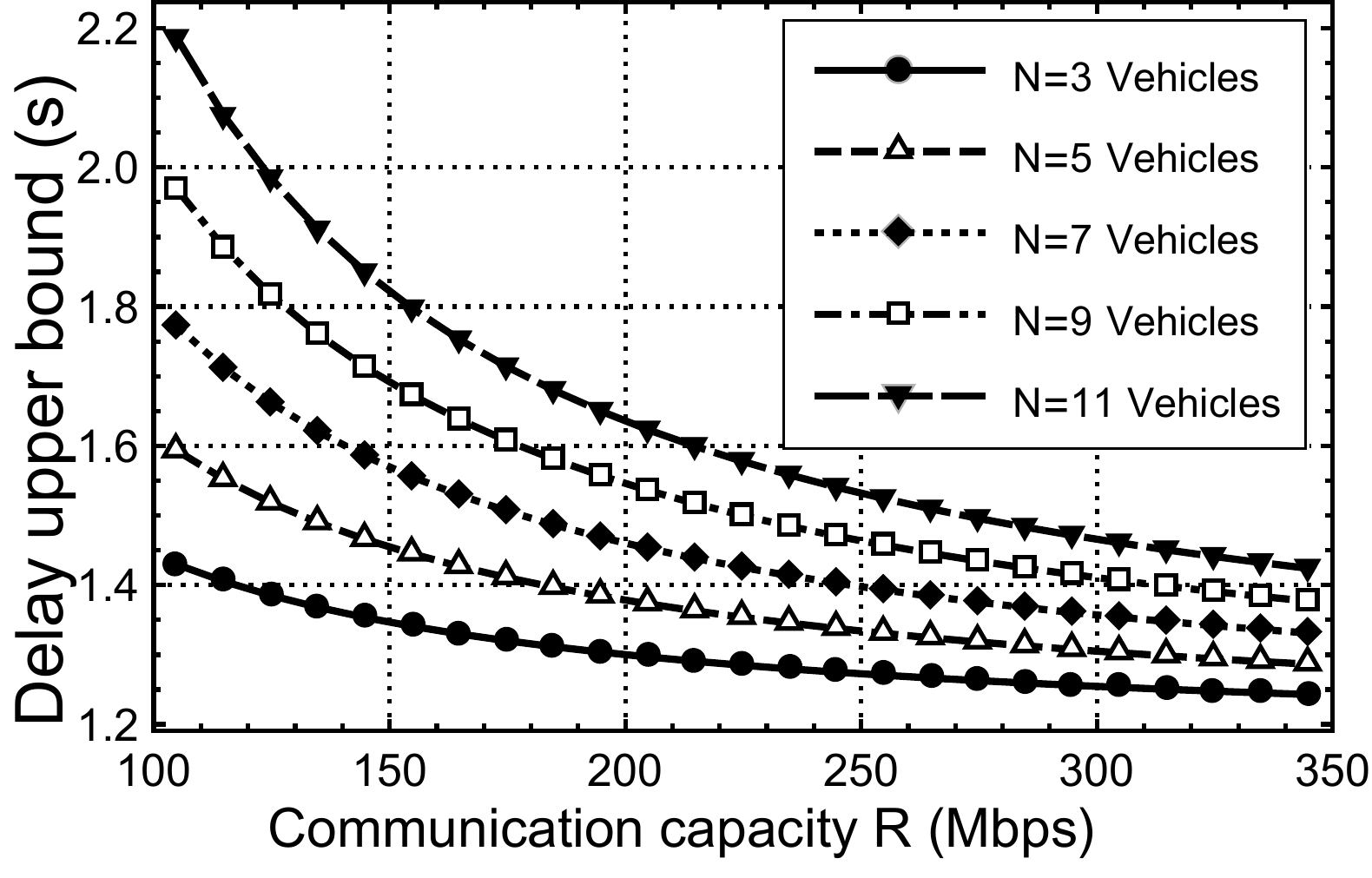}}} \hfill
\caption{Numerical demonstration of Eq.~({\ref{upperbound}}).}
\label{fig_NC_Performance}
\end{figure}

\noindent where $\Lambda = \left(2^{\varepsilon+1}-1+2^{\varepsilon}(\gamma-\varepsilon)\right)W_0$, $H_{\lambda}=N\sum_{l\neq k}^{K}\lambda_l+(N-1)\lambda_k$ and $H_{o}=N\sum_{l\neq k}^{K} o_l+(N-1)o_k$. Similarly, the service curve of on-board computing and executing is
\begin{equation}
\begin{split}
\beta_{i,k,h}(t)=r_h\left(t-\frac{o_k\eta_k}{\theta_i}\right)^+ ,
\end{split}
\label{sdbb}
\end{equation}

\noindent where $r_h$ is the execution rate, which is much higher than the V2V transmission capacity, i.e. $r_h \gg R$. Then, according to the concatenation property \cite{Katsaros2016End}, the total offloading service curve of the CACC application for vehicle $i$ is
\begin{equation}
\begin{split}
\beta_{i,k}^{j}(t)=\beta_{i,k}\otimes\beta_{i,k,h},
\end{split}
\label{jkljghkpyu}
\end{equation}

\noindent where $(a\otimes b)(x)=\inf_{0\leq y\leq x}[a(y)+b(x-y)]$. Thus, we get
\begin{equation}
\begin{split}
\beta^j_{i,k}(t) = (R_j - H_{\lambda}) \left[ t - \left( \frac{o_k \eta_k}{\theta_i}+\frac{\Lambda   R_j+H_o}{R_j - H_{\lambda}} \right)\right]^+.
\end{split}
\label{lkjhlwm}
\end{equation}


\noindent Based on the delay bound theorem \cite{Jiang2008Stochastic}, the service delay $\mathscr{D}_{i,k}(t)$ of the application offloading satisfies
\begin{equation}
\begin{split}
\mathscr{D}_{i,k}(t)\leq h(\alpha_{i,k}(t),\beta_{i,k}^j(t)),
\end{split}
\label{xmxmxm}
\end{equation}

\noindent where $h(a,b)=\sup_{h\geq 0}\{\inf\{\tau\geq 0:a(x)\leq b(x+\tau)\}\}$. Finally, the upper bound of the V2V offloading delay of application $k$ for vehicle $i$ is
\begin{equation}
\begin{split}
T_{(ij)k} &= h(\alpha_{i,k}(t),\beta_{i,k}^j(t)) \\
&=\frac{o_k\eta_k}{\theta_i} + \frac{o_k}{R_j - H_{\lambda}}+ \frac{\Lambda H_{\lambda}+ H_o}{R_j - H_\lambda}+\Lambda .
\end{split}
\label{upperbound}
\end{equation}


\end{proof}

To show the effectiveness of the proposed bound in Eq.~(\ref{upperbound}), we use a simple numerical example to demonstrate the impact of communication and computing resources on the upper bound delay of V2V offloading in Fig.~\ref{fig_NC_Performance}. Fig.~\ref{skskskskskkkks} depicts the curve of upper bound delay $T_{(ij)k}$ with the communication capacity $R_j$ and computing capacity $\theta_i$ of vehicle $i$ in road segment $j$. In the numerical scenario, the platoon is assembled by $3$ vehicles. Each vehicle has to support $5$ vehicular assistance applications, i.e. $K=5$. The data volume of applications $\bm{o} = \{o_1,\dots,o_K\}$ is randomly distributed from $[1,3]$ (Mb). The arrival rate $\bm{\lambda} = \{\lambda_1,\dots,\lambda_K\}$ is generated from the uniform distribution where $U$~$[0.4, 0.8]$. $W_0 = 0.2$, $\eta_k = 5$, $\gamma = 2$, and $\varepsilon = 1$. As shown in Fig.~\ref{skskskskskkkks}, the upper bound delay drops with the rising communication capacity and on-board computing capacity $\theta_i$. While $\theta_i$ is generated from $10$ to $20$, the upper bound declines significantly. However, if $\theta_i$ is over $40$, the benefits from the high computing will become less. This is because the bottleneck of the upper bound delay is caused by the V2V transmission rather than computing delay when a vehicle has enough computing capacity.

Fig.~{\ref{jhadsjkdskjjkhb_nmsajd}} demonstrates the upper bound delay with a different number of vehicles. Due to the competition among vehicles, the upper bound delay increases with the number of vehicles. However, when the bandwidth is sufficiently large, then the delay caused by the transmission competition can be negligible. Therefore, upon the large communication bandwidth, the upper bound delay is less affected by the number of vehicles in the platoon, but it is predominated by the computing capacity of each vehicle.


\begin{lemma}
\label{kjasdfksal}
When the on-board computing is sufficient large, then the upper bound of V2V offloading delay is reduced to the summation of a transmission delay, a competition delay, and a protocol-related part. i.e., $\lim\limits_{\theta_i \to +\infty}T_{(ij)k} = \frac{o_k}{R_j - H_{\lambda}}+ \frac{\Lambda H_{\lambda}+ H_o}{R_j - H_\lambda}+\Lambda$. While the bandwidth is sufficient large, then the upper bound of V2V offloading delay is reduced to the computing delay plus the protocol-related part. i.e., $\lim\limits_{R_j \to +\infty}T_{(ij)k} = \frac{o_k\eta_k}{\theta_i}+\Lambda$.
\end{lemma}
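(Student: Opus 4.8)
The plan is to prove each of the two limits directly from the closed-form expression in Eq.~(\ref{upperbound}), treating it as a sum of four additive terms: the computing delay $\frac{o_k\eta_k}{\theta_i}$, the transmission delay $\frac{o_k}{R_j-H_\lambda}$, the competition delay $\frac{\Lambda H_\lambda + H_o}{R_j-H_\lambda}$, and the protocol-related part $\Lambda$. The whole argument reduces to tracking which of these four terms actually depend on the variable sent to infinity, and observing that the remaining terms are constants with respect to that variable. The key preliminary observation I would record is that the quantities $H_\lambda = N\sum_{l\neq k}^{K}\lambda_l+(N-1)\lambda_k$, $H_o = N\sum_{l\neq k}^{K}o_l+(N-1)o_k$, $\Lambda = (2^{\varepsilon+1}-1+2^{\varepsilon}(\gamma-\varepsilon))W_0$, and $o_k,\eta_k$ are all fixed by the application profile and the back-off protocol, so none of them varies with either $\theta_i$ or $R_j$.

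For the first limit, I would send $\theta_i \to +\infty$. Among the four terms only the computing delay $\frac{o_k\eta_k}{\theta_i}$ contains $\theta_i$, and since $o_k\eta_k$ is a fixed nonnegative constant we have $\frac{o_k\eta_k}{\theta_i}\to 0$. The other three terms are constant in $\theta_i$ and therefore pass through the limit unchanged, giving $\lim_{\theta_i\to+\infty}T_{(ij)k} = \frac{o_k}{R_j-H_\lambda}+\frac{\Lambda H_\lambda+H_o}{R_j-H_\lambda}+\Lambda$, exactly the claimed transmission-plus-competition-plus-protocol expression. This matches the physical reading already discussed after Fig.~\ref{fig_NC_Performance}: once on-board computing is abundant, the bottleneck is the V2V channel.

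For the second limit, I would send $R_j \to +\infty$. Here both the transmission delay $\frac{o_k}{R_j-H_\lambda}$ and the competition delay $\frac{\Lambda H_\lambda+H_o}{R_j-H_\lambda}$ share the denominator $R_j-H_\lambda$ with numerators that are fixed in $R_j$; hence each tends to $0$ as $R_j-H_\lambda\to+\infty$. The computing delay $\frac{o_k\eta_k}{\theta_i}$ and the protocol part $\Lambda$ carry no $R_j$ dependence and survive, yielding $\lim_{R_j\to+\infty}T_{(ij)k}=\frac{o_k\eta_k}{\theta_i}+\Lambda$. The one point deserving explicit mention is well-definedness of the two vanishing terms along the limit: the denominator $R_j-H_\lambda$ must stay positive, which is precisely the stability condition $H_\lambda < R_j$ inherited from the leftover service curve in Eq.~(\ref{rty;lyi}) (equivalently the rate constraint $N\sum_{i=1}^{K}\lambda_i o_i \le R_j$ stated before Theorem~\ref{asdkjgsl}); since we are increasing $R_j$ without bound this is automatically satisfied for all large $R_j$.

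In short, there is no genuine analytic obstacle here: the statement is a pair of elementary one-sided limits of a rational expression, and the only thing requiring care is verifying that $H_\lambda$, $H_o$, $\Lambda$, $o_k$ and $\eta_k$ are independent of the limiting variable and that the denominator remains positive. Both facts follow immediately from the definitions fixed in the proof of Theorem~\ref{asdkjgsl}, so the bulk of the write-up is simply substituting the limit into Eq.~(\ref{upperbound}) term by term.
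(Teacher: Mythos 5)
Your proof is correct and takes the same route as the paper, which simply notes that both limits follow from Eq.~(\ref{upperbound}) by letting $\theta_i\to\infty$ and $R_j\to\infty$ respectively; your term-by-term bookkeeping and the check that $R_j-H_\lambda$ stays positive are just a more careful write-up of that one-line argument.
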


\begin{proof}
The above equations can be derived by Eq.~({\ref{upperbound}}) taking $\theta_i\to \infty$ and $R_j\to\infty$, respectively.
\end{proof}

Lemma \ref{kjasdfksal} indicates the efficient way to increase resource that can significantly reduce the upper bound delay of V2V offloading.

\section{Multi-Armed Bandit Resource Scheduling} 

The CACC system needs sufficient computing and communication to process diverse and historical kinetic analyses to maintain the optimal inter-vehicle spacing among vehicles. However, many other automated assistance applications, such as cooperative malicious attacks detection applications \cite{Wang7999188}, and cooperative lane change applications, will compete with the CACC application for the limited bandwidth and on-board process capacity. If the CACC system cannot obtain sufficient resources to maintain the optimal safety distance obtained from Alg.~{\ref{Ag1}}, CAVs will increase the inter-vehicle spacing to reduce the resource demands of the CACC application \cite{8644035}. 

In general, different vehicles occupy different computing resources since the on-board processors are diverse. Besides, the available V2V communication bandwidth of a road segment is determined by the number of vehicles and the bandwidth assignment of the road segment \cite{8080373}. Because of the unbalanced distribution of vehicles and resources, some vehicles with the deficient computing or communication resource cannot attain the optimal inter-vehicle spacing.

In this section, we study the resource allocation under intermittent V2V communication. Here, through Alg.~\ref{Ag1}, we can obtain the optimal safety distance $s^*$ of $P3$. Hereafter, according to Eq.~(\ref{asdskbkl}), the delay requirement of CACC application to maintain the optimal $s^*$ among CAVs is 


\begin{equation}
\begin{aligned}
\begin{split}
\tau_0(s^*) = \frac{\sqrt{v^2+2As^*} - v}{A} .
\end{split} 
\end{aligned} 
\label{cjscnsj}
\end{equation}

\noindent where $\tau_0$ is the perception-reaction delay that represents the duration from an event happens to the preventive action adopted by vehicles \cite{Nekoui2010Fundamental}. To satisfy the limited resource constraints, the upper bound of the V2V offloading delay $T_{(ij)k}$ should not exceed the perception-reaction delay $\tau_0(s^*)$ of the optimal safety distance $s^*$.

Therefore, the vehicles in a road segment can be divided into two groups: one group is resource-deficient vehicles, another is resource-rich vehicles. The criterion of distinguishing the two groups is based on the value of $(T_{(ij)k} - \tau_0)$. Vehicles with $(T_{(ij)k} - \tau_0) <0$ are clustered into $J^1$ that represents the vehicles with sufficient resources. While vehicles with $(T_{(ij)k} - \tau_0) > 0$ are clustered into $J^0$ that represents the vehicles lacking of resource. Due to the resource constraints, the inter-vehicle spacing of vehicles in $J^0$ cannot maintain the optimal safety distance with the preceding vehicle.

Hereafter, we rank $J^0 = \{J^0_1, J^0_2, \dots J^0_p \}$ in order of descending $(T_{(ij)k} - \tau_0)$. For instance, $J^0_1$ is the vehicle with the largest $(T_{(ij)k} - \tau_0)$. Next, $J^0_1$ will offload its application to the vehicles in set $J^1= \{J^1_1, J^1_2, \dots J^1_q \}$. Afterwards, vehicle $J^0_2$ offloads its applications to the $J^1$ vehicles, and so on. Before each offloading, vehicles will update the value $(T_{(ij)k} - \tau_0)$ for the next scheduling. In addition, we draw the offloading pairs of vehicles of $J^0$ and $J^1$ in Fig.~\ref{cvcv}. In this demonstration, each vehicle of $J^0$ has applications needed to offload. The offloading targets are selected from $J^1$. Each vehicle of $J^1$ can handle multiply offloading tasks according to the redundant computing capability.

\begin{figure} 
     \centering
     \includegraphics[width=0.4\textwidth]{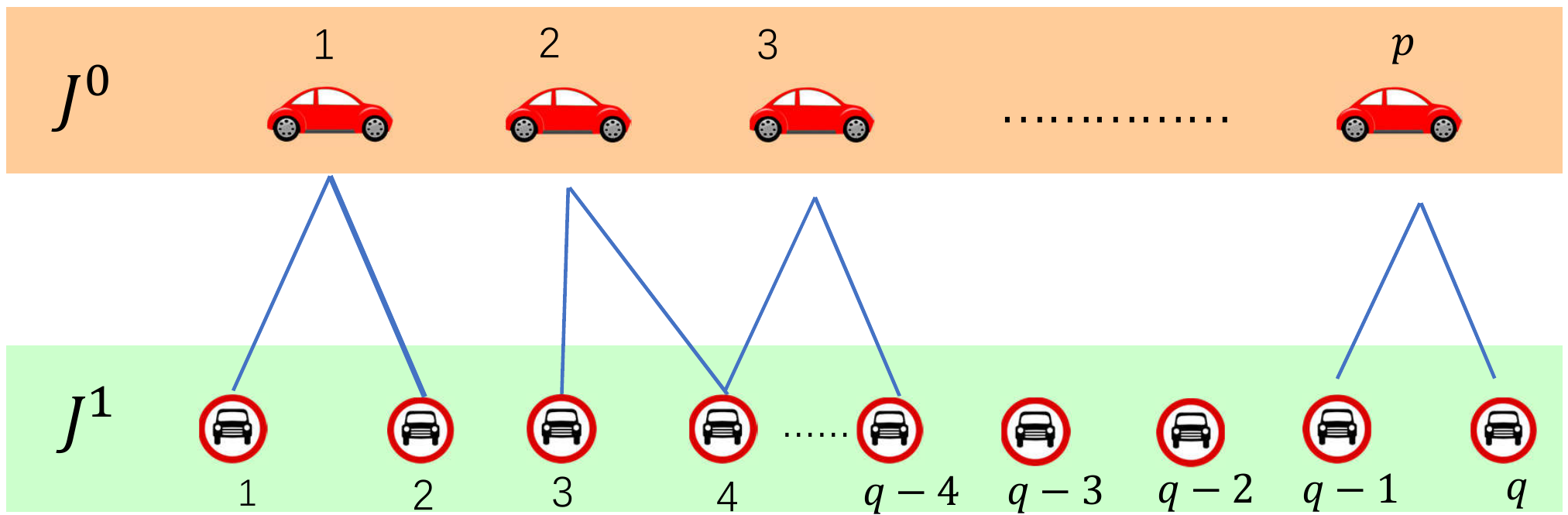} 
     \caption{Offloading pairs between $J^0$ and $J^1$.} 
     \label{cvcv}
\end{figure} 


Consequently, we propose a Sleeping Multi-Armed Bandit Tree-based Offloading (SMTO) scheduling that selects candidate vehicles to process the offloading applications in a high mobility circumstance. The sleeping Multi-Armed Bandit (MAB) model refers to the sequential optimization problem where the action set is time-varying \cite{Kleinberg2010}. The available actions at each round are uncertain that is same with the intermittent V2V transmission, where the application will disappear when the vehicle leave the radio range of the target vehicle. 

We assume that the offloading vehicle does not have prior knowledge about its environment (e.g., which vehicles will leave or stay in the radio range). This significantly reduces the communication overhead. The offloading vehicle does not need to issue its kinetic information to surroundings. Our proposed algorithm focuses on the vehicle platoon, where vehicles may drive off the communication platoon during the offloading. The candidate offloading target are selected by
\begin{equation}
\begin{aligned}
\begin{split}
b_{ig}(t+1)&=\underset{j \in \mathcal{N}_{i}}{\arg \max } \ Q_g(t)\\&+ \sqrt{\frac{\mathscr{P}_g\left[\tau_{k}-T_{(i j) k}\right]^{+} \ln n_{(i j)}(t)}{\mathcal{J}_{(i j)}(t)}} ,
\end{split} 
\end{aligned} 
\label{rrrrrr}
\end{equation}

\noindent where $b_{ig}(t)$ represents the selected target to process the offloading application $g$ from the vehicle $i$ at time $t$. $Q_g(t)$ is the reward of finishing the application $g$ at time $t$ \cite{Kato8657791}. $n_{(ij)}(t)$ is the connected duration between the vehicle $i$ and vehicle $j$ at time $t$, which can be measured by the period HELLO message. When vehicle $j$ leaves the platoon, $n_{(xj)}(t)$ is reset to $0$, where $x$ represents any vehicle in the platoon. $\mathcal{J}_{(ij)}(t)$ is the number of selection times of vehicle $j$ to be the offloading target for vehicle $i$. And, $\mathscr{P}_g$ is a weight of application $g$. However, according to Eq.~(\ref{rrrrrr}), if a new vehicle appears in the platoon, it will be selected as the offloading target. The reason is that a new vehicle usually stays longer than the previous vehicles in the platoon. So that the new vehicle can provide a more stable V2V connection than that of the previous vehicles in the platoon. 



The Sleeping MAB Tree search structure is illustrated in Fig.~\ref{dfvjvdvvv}. Since one platoon can only support $N = \frac{2\iota L}{s^*}$ vehicles, each vehicle connects with $N-1$ vehicles via V2V communication. There are $K$ number of V2V applications for offloading. In addition, we sort the vehicular applications with the order of priority, where application $i$ represents the $i^{th}$ priority application. The application $1$ has the highest priority. The application with high priority is delay-sensitive, such as CACC and lane change assist applications. As shown in Fig.~\ref{dfvjvdvvv}, the first row of the tree demonstrates the application $1$ offloading. Subsequently, the application $2$ is offloaded in the second row, etc. In each application offloading, the algorithm will check the $J^0_j$ of the road segment $j$ is whether or not empty. If $J^0_j\in \emptyset$, the algorithm is stopped. 

\begin{figure} 
     \centering
     \includegraphics[width=0.4\textwidth]{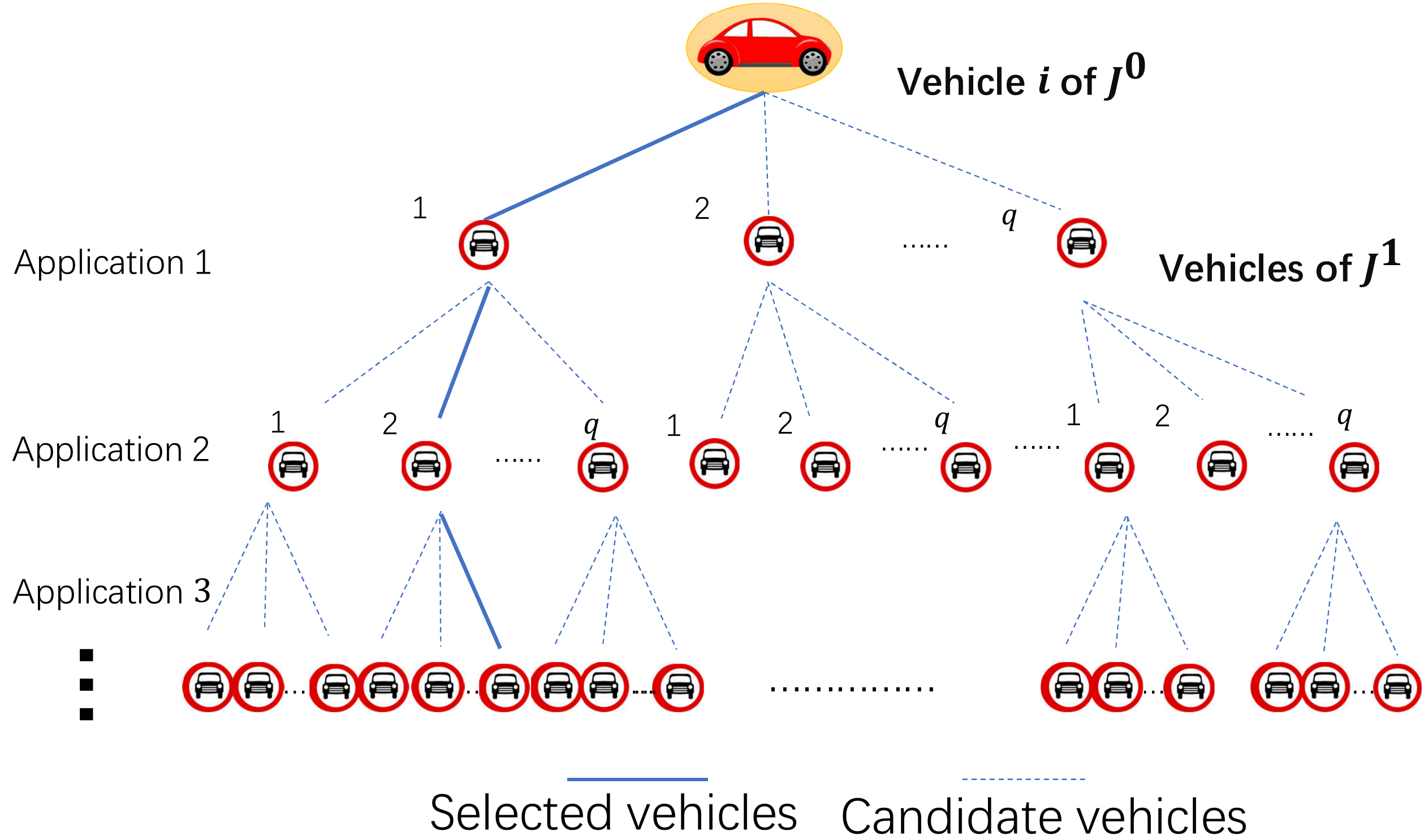} 
     \caption{Sleeping MAB Tree-based Offloading.} 
     \label{dfvjvdvvv}
\end{figure}

In this paper, the V2V bandwidth of each road segment is centrally assigned by the transportation management server. After the match of $J^0$ and $J^1$, we divide the road segments into two groups. The road segments with $J^0 = \emptyset$ belong to $J^0_{empty}$. The other road segments with $J^0 \neq \emptyset$ belong to $J^0_{exist}$, where $J^0_{empty}$ represents the road segment without any deficiency vehicle. However, $J^0_{exist}$ represents the road segments that still have some vehicle lacking of resource to maintain the optimal safety distance. According to Eq.~({\ref{upperbound}}), the minimum recouped communication bandwidth $\mathscr{R}^{exist}_j$ is used to recoup the bandwidth of road segment $j\in J^0_{exist}$ that is identical to
\begin{equation}
\begin{split}
\mathscr{R}^{exist}_j = \max_{i \in j} \left[\frac{o_k + \Lambda H_{\lambda}+ H_o}{\tau_0 - \frac{o_k\eta_k}{\theta_i} - \Lambda} + H_{\lambda} - R_j \right].
\end{split}
\label{upperchangi}
\end{equation}

\noindent On the other hand, road segments of $J^0_{empty}$ provide their part of communication bandwidth to recoup the deficient resource in $J^0_{exist}$. Similarly, based on Eq.~({\ref{upperbound}}), the maximum communication bandwidth $\mathscr{R}^{empty}_u$ supplied by road segment $u \in J^0_{empty}$ is
\begin{equation}
\begin{split}
\mathscr{R}^{empty}_u = \min_{i \in u} \left[R_u - \frac{o_k + \Lambda H_{\lambda}+ H_o}{\tau_0 - \frac{o_k\eta_k}{\theta_i} - \Lambda} - H_{\lambda} \right].
\end{split}
\label{upaaaapercwwwww}
\end{equation}

\noindent Therefore, to balance the bandwidth distribution of different road segments, the provided bandwidth of road segment $u \in J^0_{empty}$ is 
\begin{equation}
\begin{split}
\Delta R^{empty}_u = \mathscr{R}^{empty}_u - \max \left[ \frac{D_\mathscr{R}}{M}, \ \ 0 \right].
\end{split}
\label{xslxlskxlsxk}
\end{equation}

\noindent And, the supplied bandwidth of road segment $j\in J^0_{exist}$ is
\begin{equation}
\begin{split}
\Delta R^{exist}_j = \mathscr{R}^{exist}_j + \frac{D_\mathscr{R}}{M}.
\end{split}
\label{xddldxxx}
\end{equation}

\noindent where $D_\mathscr{R} = \sum\limits_{u\in J^0_{empty}}{\mathscr{R}^{empty}_u} - \sum\limits_{j\in J^0_{exist}}{\mathscr{R}^{exist}_j}$ is the difference between the total surplus bandwidth of $J^0_{empty}$ and the total deficient bandwidth of $J^0_{exist}$. While $D_\mathscr{R} < 0$, the total bandwidth of the transportation system cannot maintain all vehicles with the optimal safety distance $s^*$. Some vehicles in the deficient road segment will increase their average inter-vehicle spacing (sacrificing road traffic efficiency) to guarantee driving safety.

The process of the SMTO resource allocation is elaborated in Alg.~\ref{Ag2}, where $\bm{R} = [R_1,\dots, R_M]$. $M$ is the total number of road segments. $Q_{parent \ of \ b_{ig}}$ is the reward of the parent of the node $g$ in vehicle $i$. At time $t$, if a new vehicle becomes available to connect with vehicle $i$, the algorithm will choose it as the offloading target. Otherwise, the algorithm selects the vehicle with Eq.~(\ref{rrrrrr}) among the available vehicles, where $\mathcal{H}_{b_{g}}(t) = \sqrt{\frac{\mathscr{P}_g\left[T_{(i j) k}-\tau_{k}\right]^{+} \ln n_{(i j)}(t)}{\mathcal{J}_{(i j)}(t)}}$ is the width of the confidence interval of vehicle $b_{ig}$ at the time $t$. Hereafter, we verify $J^0_{exist}$ whether it is the empty set. If it is not, the transportation management server will reassign the bandwidth to recoup the deficient resource in $J^0_{exist}$.

\begin{algorithm} 
	\caption{SMTO Resource Allocation} \label{Ag2}
	$\bf{input:}$ $\bm{R}$, $Q_g(0)$, $J^0$, and $J^1$  \\
	\For{$j:$ road segment $1$ to $M$} 
	{
	\While{True}
	{

		\If{$J^0_j(g,b_{ig}) = \emptyset$}
		{
				Update reward $\displaystyle Q_{g}(t+1)$;

		\While{$b_{ig}$ has the parent}
		{
    		Update $Q_{\text{parent of } b_{ig}}(t+1) \leftarrow Q_{g}(t+1)$;
		}
		
		}
		
		\If{$\mathscr{S}(g,b_{ig})$ is a leaf node}
		{

			Select a child for application $g+1$; \\ $b_{i(g+1)} \leftarrow \underset{j \in \mathcal{N}_{i}}{\arg \max } \ Q_g(j) +\mathcal{H}_{b_{g}}(t)$;

		}

    $n_{b_{ig}}(t+1) \leftarrow n_{b_{ig}}(t) + 1$, for all $b_i(t)\in N_i(t)$
		
			\If{vehicle $j$ accepts the application $g+1$}
			{
				Append the child $j$ to the tree;\\
			    Update $\mathcal{J}_{(i j)}(t+1) \leftarrow \mathcal{J}_{(i j)}(t) + 1$;
			}

		\If{$J^0_{exist}\neq \emptyset$}
			{
				\If{$j \in J^0_{exist}$}
				{
					$R_j \leftarrow R_j + \Delta R^{exist}_j$;
				}
				
				\If{$j \in J^0_{empty}$}
				{
					$R_j \leftarrow R_j - \Delta R^{empty}_j$;
				}		
					
			}
			\Else{
          	   Break;		
		     }
			
	}
}
$\bm{output:}$ $\bm{R}$		
	
\end{algorithm}

%
%
%
%
%
%
%
%
%
%

\begin{figure*}
\centering
\subfloat[Impact of the safety distance on vehicle string stability.]{\label{adsfjlh_3}{\includegraphics[width=0.25\linewidth]{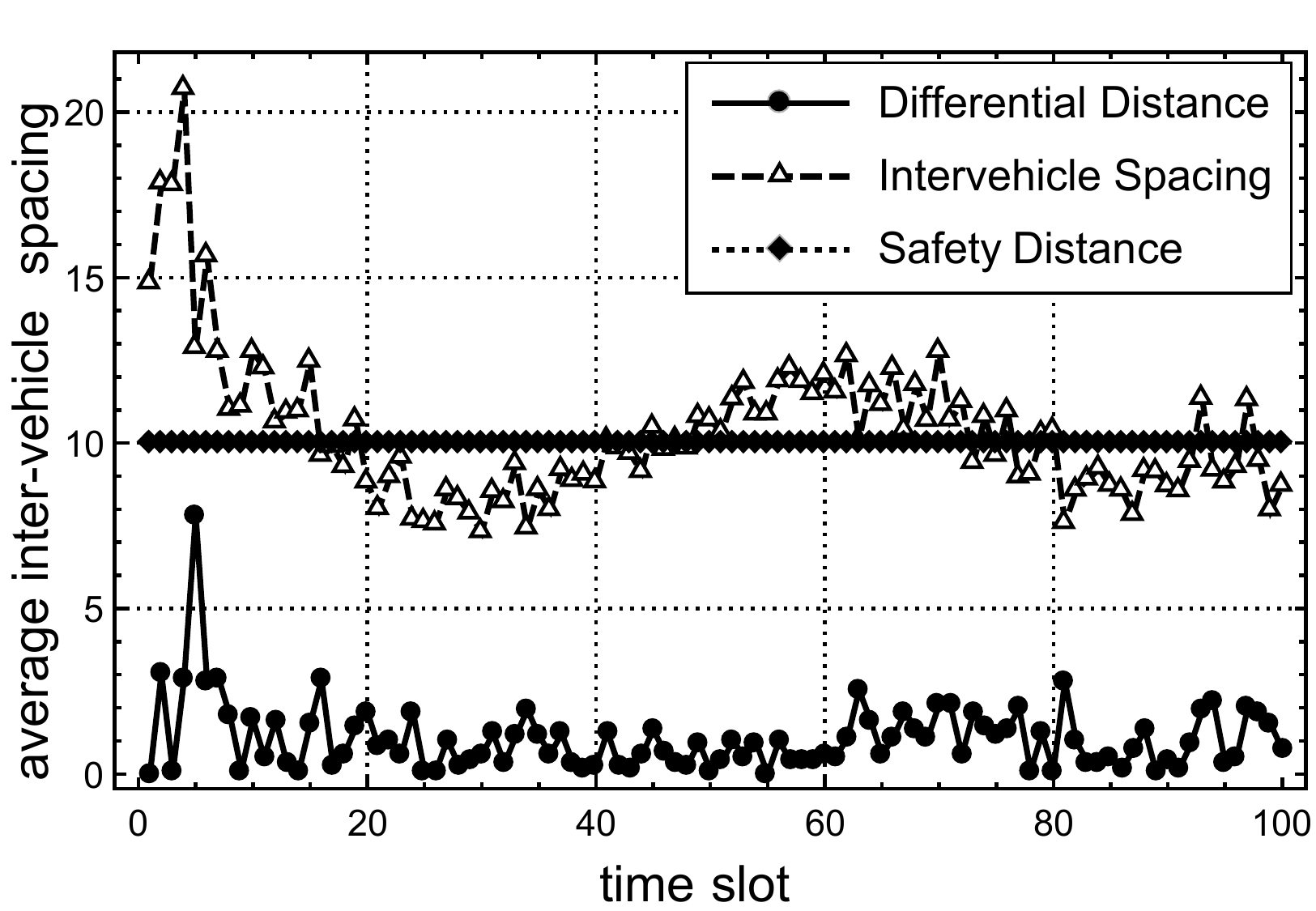}}} \hfill
\subfloat[Impact of the safety distance on throughput.]{\label{sfdvhjfdkjgkbjekwbu2i3}{\includegraphics[width=0.22\linewidth]{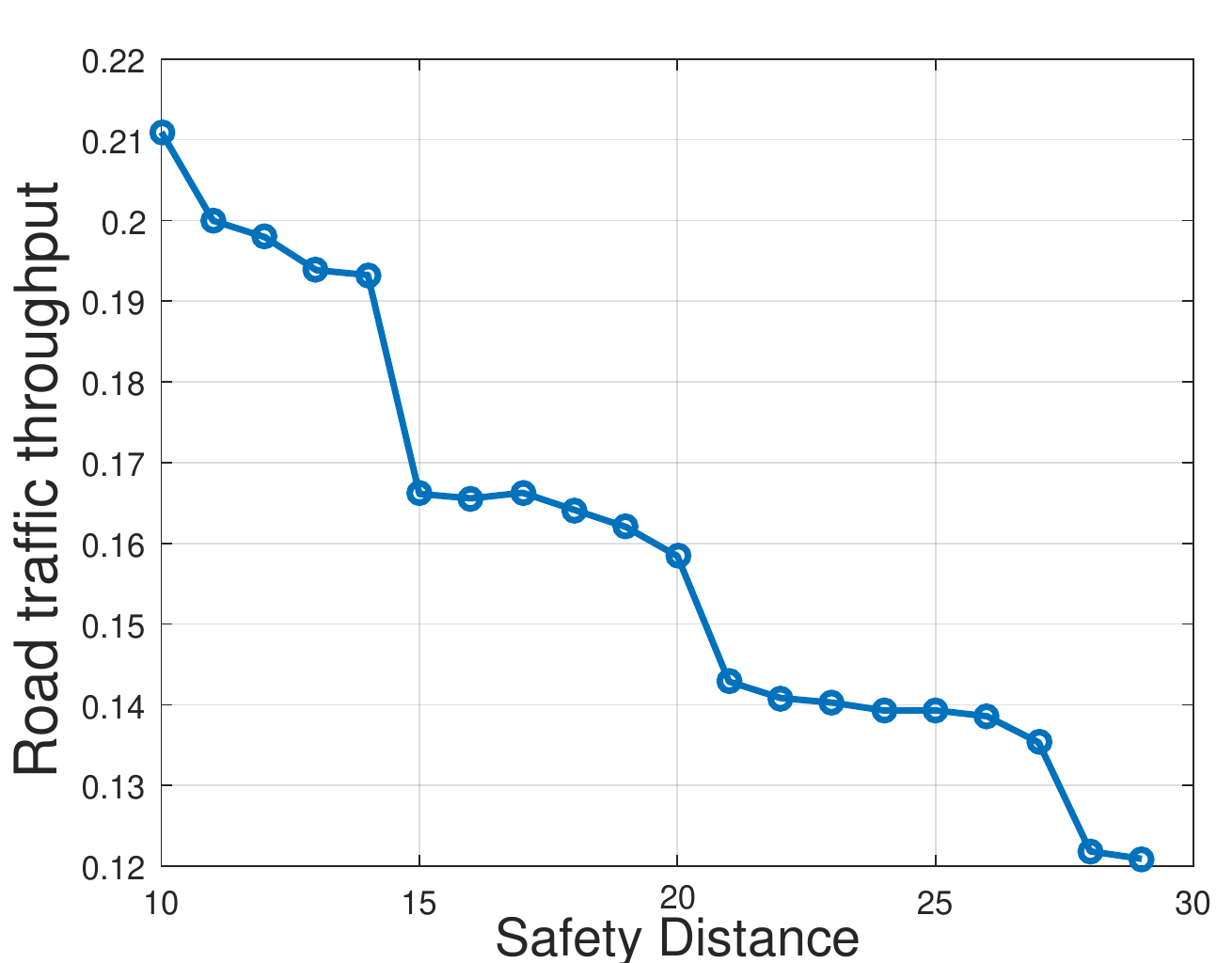}}} \hfill
\subfloat[Relation between string stability and road traffic throughput.]{\label{kjkkkkkk}{\includegraphics[width=0.25\linewidth]{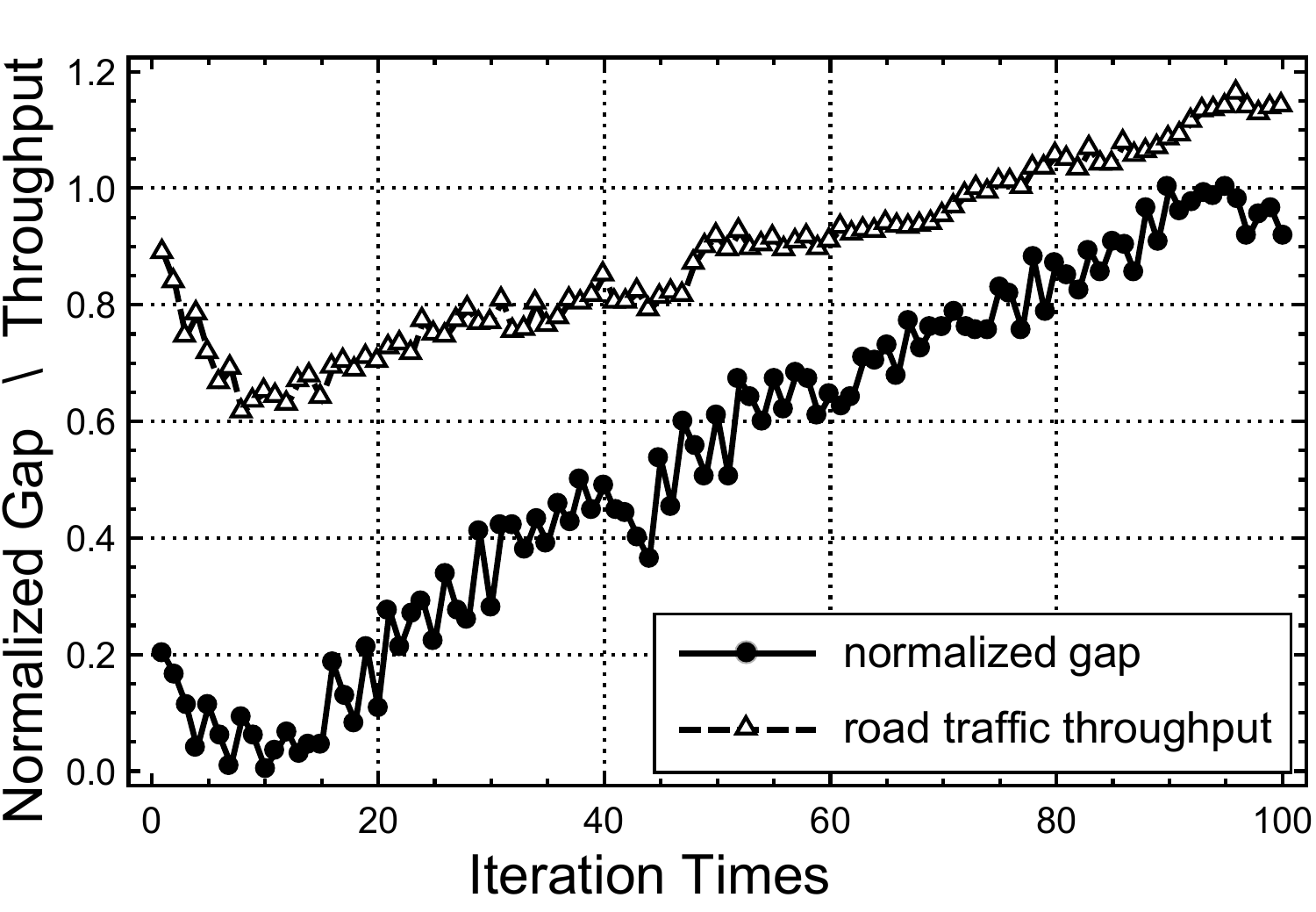}}}\hfill
\subfloat[Safety distance with different $\delta$ that represents the weight of string stability in $P1$.]{\label{adsljgabrkjedsfadfbgpvixpu}{\includegraphics[width=0.26\linewidth]{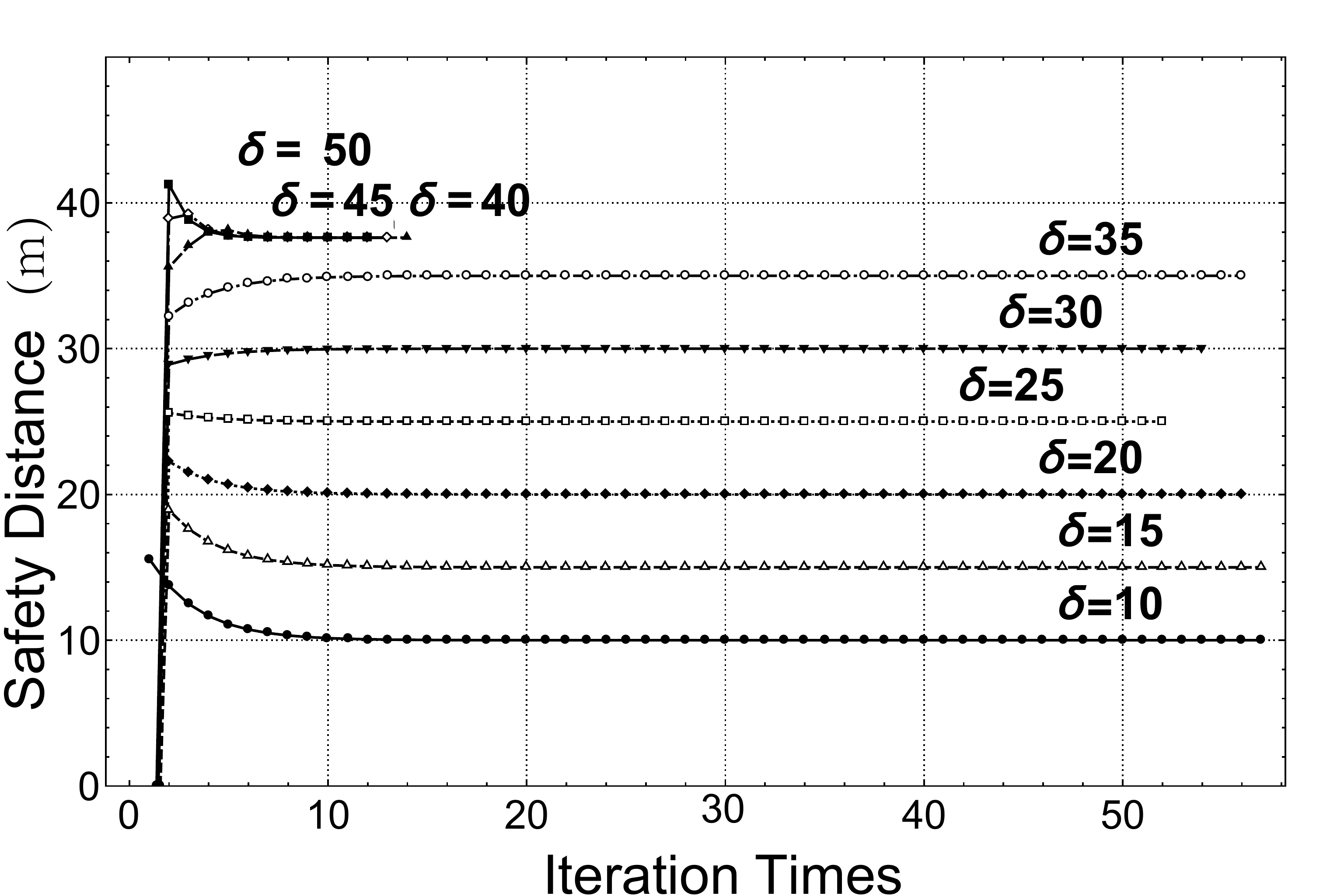}}} \hfill
\caption{Relations of the safety distance with string stability and road traffic throughput.}
\label{Per_of_Road_Traffic}
\end{figure*}

\section{Performance Evaluation}

To confirm the impact of safety distance on the string stability and road traffic throughput, we account for a Cellular Automata-based three-lane highway scenario, which has been investigated in a plethora of road traffic studies. Daoudia \textit{et al.} \cite{Daoudia2003Numerical} proposed a Cellular Automata-based three-lane version which takes into account of the exchange vehicles between the different lanes. However, this model did not account for the acceleration of vehicles. Li \textit{et al.} \cite{Li2016ACS} considered the heterogeneity of vehicle acceleration by the Cellular Automata, but the simulation is only suitable for the freeway traffic flow. Zamith \textit{et al.} \cite{Zamith2015A} defined the actions of a specific vehicle in the Cellular Automata traffic context that depicts the vehicle behaviors by the stochastic rules. However, vehicle behaviors are not always stochastic. Some of which should follow the traffic rules. In this section, we assume that a vehicle involves velocity updating (deceleration/acceleration), lane changing, and road congestion. These behaviors are elaborated below.


\subsubsection{Acceleration} if $v< v_{max} \; pixel/s$, ($v \ge 1 \; pixel/s$), and the distance with the preceding vehicle is lager than the safety distance, (the distance with the preceding vehicle is less than the safety distance), $v(t+1)=v(t)+1$, ($v(t+1)=v(t)-1$), where $v_{max} = 30$ $pixel/s$ is the limit speed in a particular road segment.
\subsubsection{Uniform speed} if $v = v_{max} \; pixel/s$ or the inter-vehicle spacing is equal to the safety distance, $v(t+1)=v(t)$.
\subsubsection{Lane Changing} If one adjacent lane has enough consecutive space, while there is heavy traffic on the current lane, the vehicle will go into the adjacent lane with a certain probability.

\subsubsection{Road congestion} road congestion is triggered when two successive vehicles touch each other. In this case, the velocity of the touched vehicles set to $0 \; pixel/s$.


\begin{figure} 
     \centering
     \includegraphics[width=0.48\textwidth]{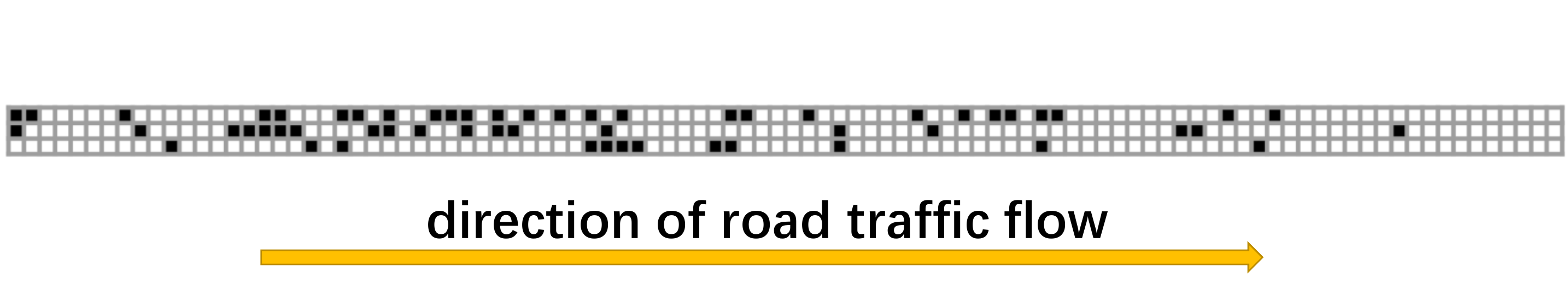} 
     \caption{Cellular Automation for road traffic. The length of the road segment is $100 \ pixels$, and the limit speed is $30 \ pixel/s$.} 
     \label{sksksks}
\end{figure}



%
%
%
%
%
%
%


The above-mentioned behaviors can precisely imitate the real vehicle behaviors on road. The simulation scenario is shown in Fig.~\ref{sksksks}, in which each black pixel represents a vehicle and each white pixel represents a unit empty space on road. The initial speed of the entry vehicles is set to $5$ $pixel/s$. The average arrival rate is $0.5$. The length of the road segment is $100$ $pixels$. Leveraging the Cellular Automata simulation, we investigate the impact of the safety distance on string stability and road traffic throughput in Fig.~\ref{adsfjlh_3} and Fig.~\ref{sfdvhjfdkjgkbjekwbu2i3}, respectively.

\subsection{The impact of safety distance on comfort and throughput}
In Fig.~\ref{adsfjlh_3}, the differential distance $Dd(t)$ represents the difference of the average inter-vehicle spacing in two successive time slots, i.e., $Dd(t) = \|\overline{s}(t)-\overline{s}(t-1)\|$, where $\overline{s}(t)$ is the average inter-vehicle spacing in a road segment at time $t$. In general, the differential distance reflects the fluctuation of the inter-vehicle spacing that represents the instability of vehicle string. The large differential distance results in a heavy fluctuation of vehicle string and deteriorating ride comfort and energy efficiency. When the average inter-vehicle spacing $\overline{s}(t)$ closes to the safety distance $s^*$, as shown in Fig.~\ref{adsfjlh_3}, the differential distance becomes small in time interval $[20, 100]$. However, in time interval $[0, 20]$, the average inter-vehicle spacing is far away from the safety distance, which results in a large fluctuation of differential distance. Meanwhile, the vehicle string becomes unstable and deteriorates the ride comfort of passengers. Fig.~\ref{sfdvhjfdkjgkbjekwbu2i3} illustrates that road traffic throughput is inversely proportional to the safety distance. The reason is that the safety distance is the equilibrium inter-vehicle spacing. And, the road traffic throughput density is determined by the equilibrium inter-vehicle spacing and velocity. Moreover, larger equilibrium vehicle spacing derives the lower road traffic throughput. Therefore, the safety distance is inversely proportional to the road traffic throughput.


\subsection{The relation between string stability and throughput}
Hereafter, we investigate the relation between string stability and road traffic throughput. To carefully compare the string stability metric $ \|\frac{1}{\rho} - s^*\|$ with road traffic throughput, we introduce the normalized gap $d_s$, which is defined as

\begin{equation}
\begin{aligned}
\begin{split}
d_s =  \min \left\{\frac{\|\frac{1}{\rho} - s^*\|}{\max\{\|\frac{1}{\rho} - s^*\|,\omega\}}, 1\right\},
\end{split} 
\end{aligned} 
\label{dadsalhasbdlnvs}
\end{equation}

\noindent where $\omega$ is an infinitesimal number to avoid zero denominator. Road traffic throughput compared with the normalized gap is demonstrated in Fig.~\ref{kjkkkkkk}. Road traffic throughput increases with the normalized gap. The large normalized gap represents the heavy fluctuation of a vehicle string that deteriorates string stability. This result verifies our analysis conclusion: the transportation operator cannot optimize road traffic throughput without considering the impact on the vehicle string stability.

\begin{figure*}
\centering
\subfloat[Execution time.]{\label{dsfjhjklhalllokoko}{\includegraphics[width=0.25\linewidth]{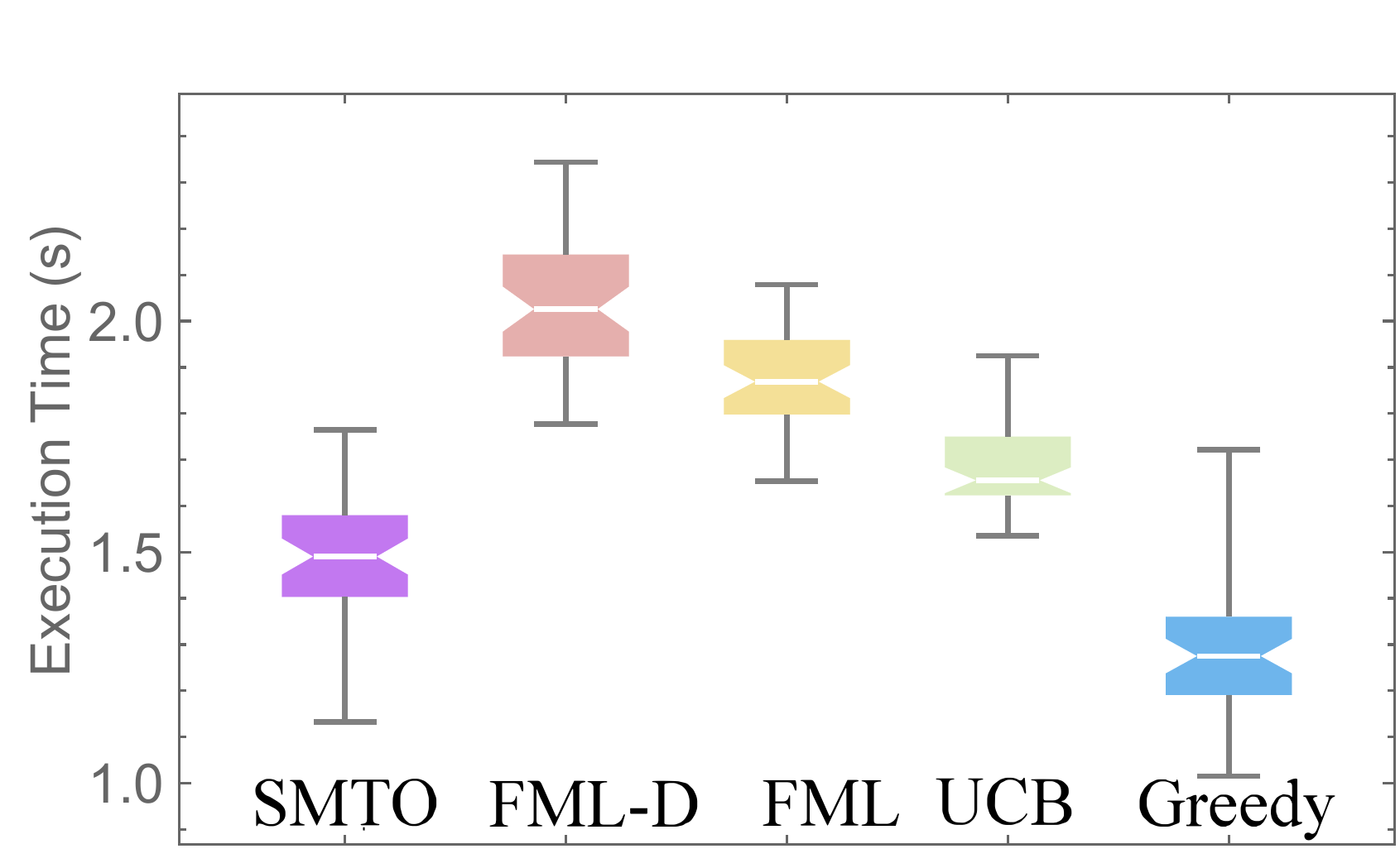}}} \hfill
\subfloat[Average offloading delay.]{\label{kjhkjhkjasappp}{\includegraphics[width=0.25\linewidth]{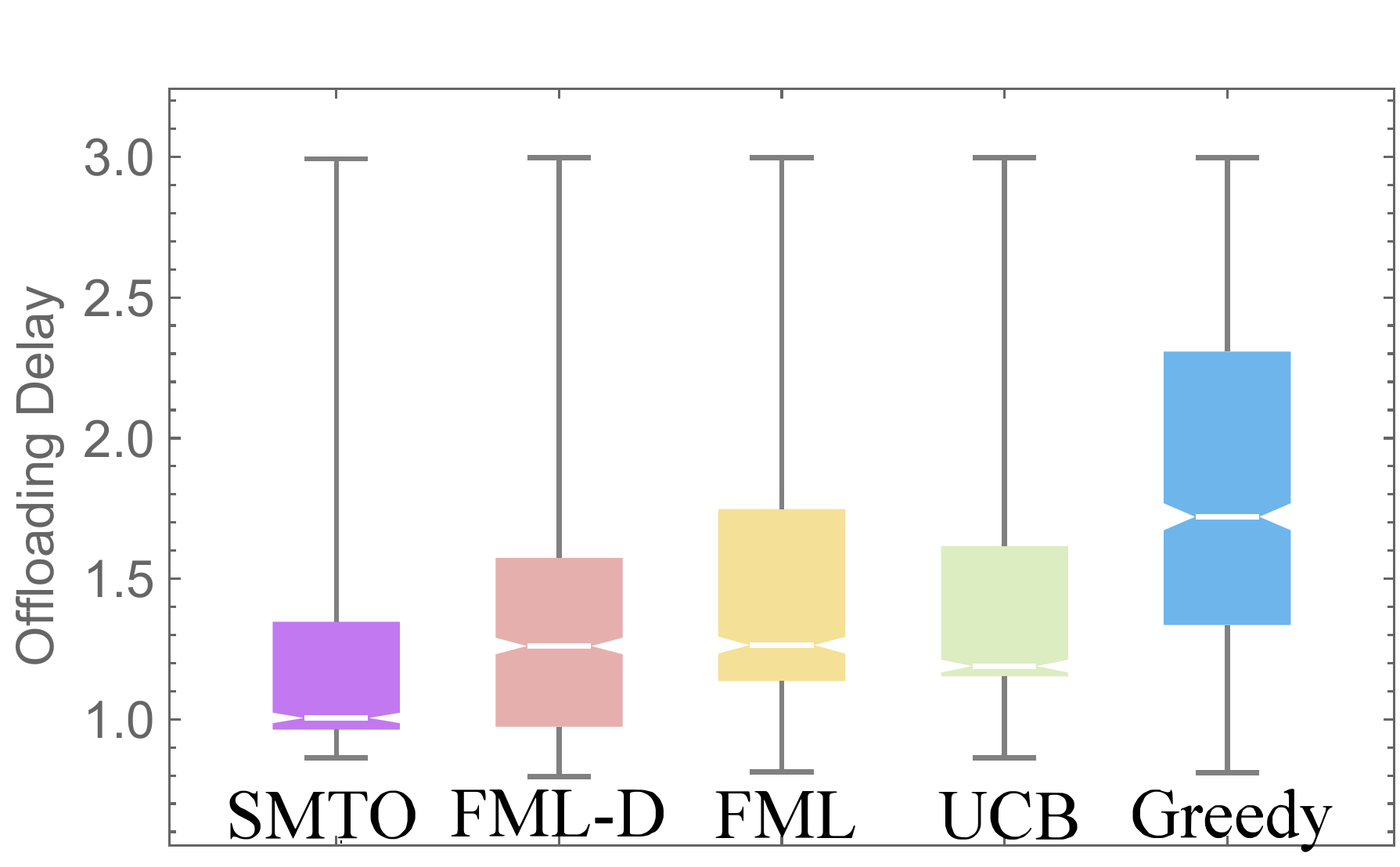}}} \hfill
\subfloat[Acceptance ratio distribution.]{\label{vvkvkvkvkvkvvsadada}{\includegraphics[width=0.25\linewidth]{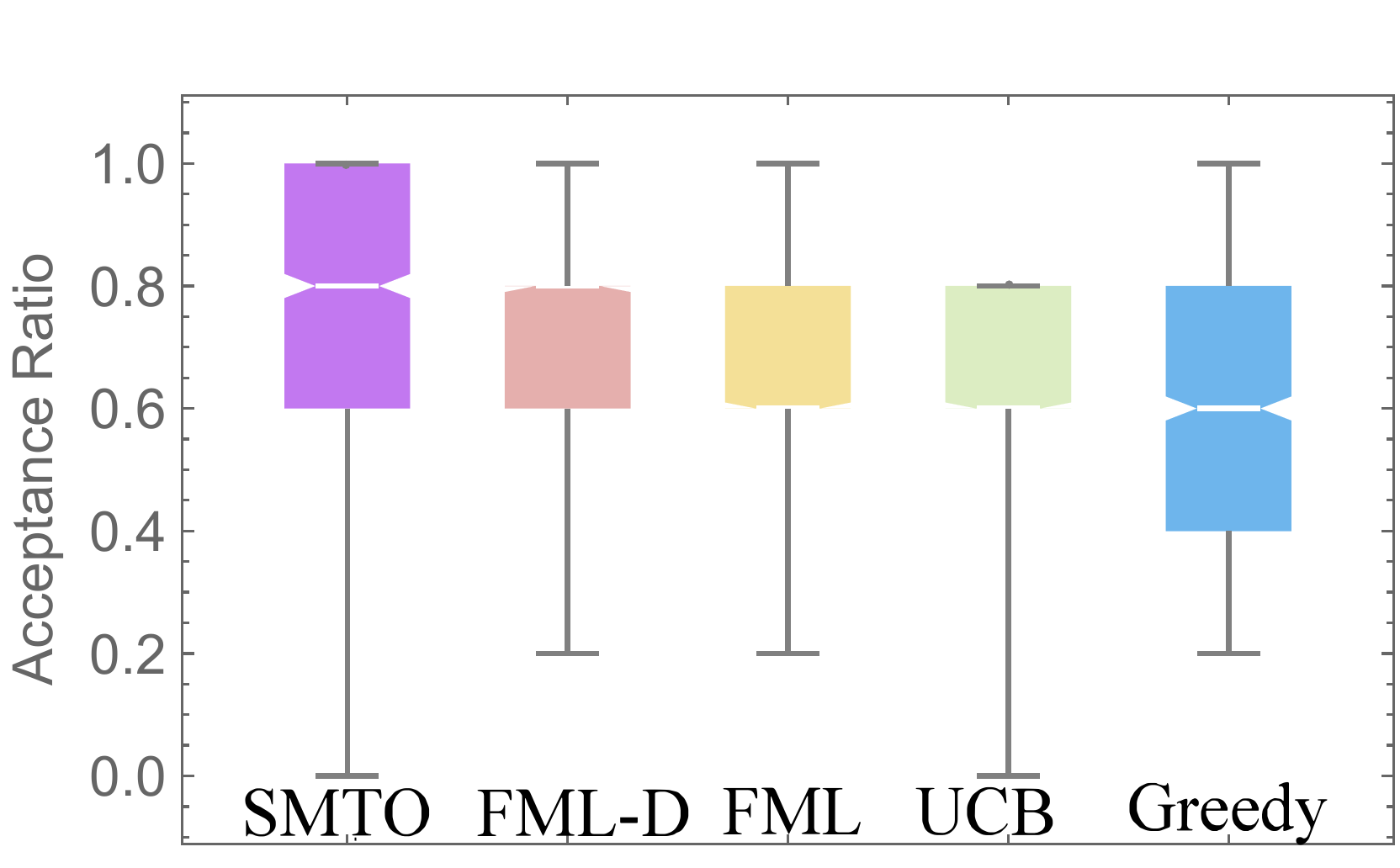}}}\hfill
\subfloat[Rewards distribution.]{\label{dasjhnnn}{\includegraphics[width=0.25\linewidth]{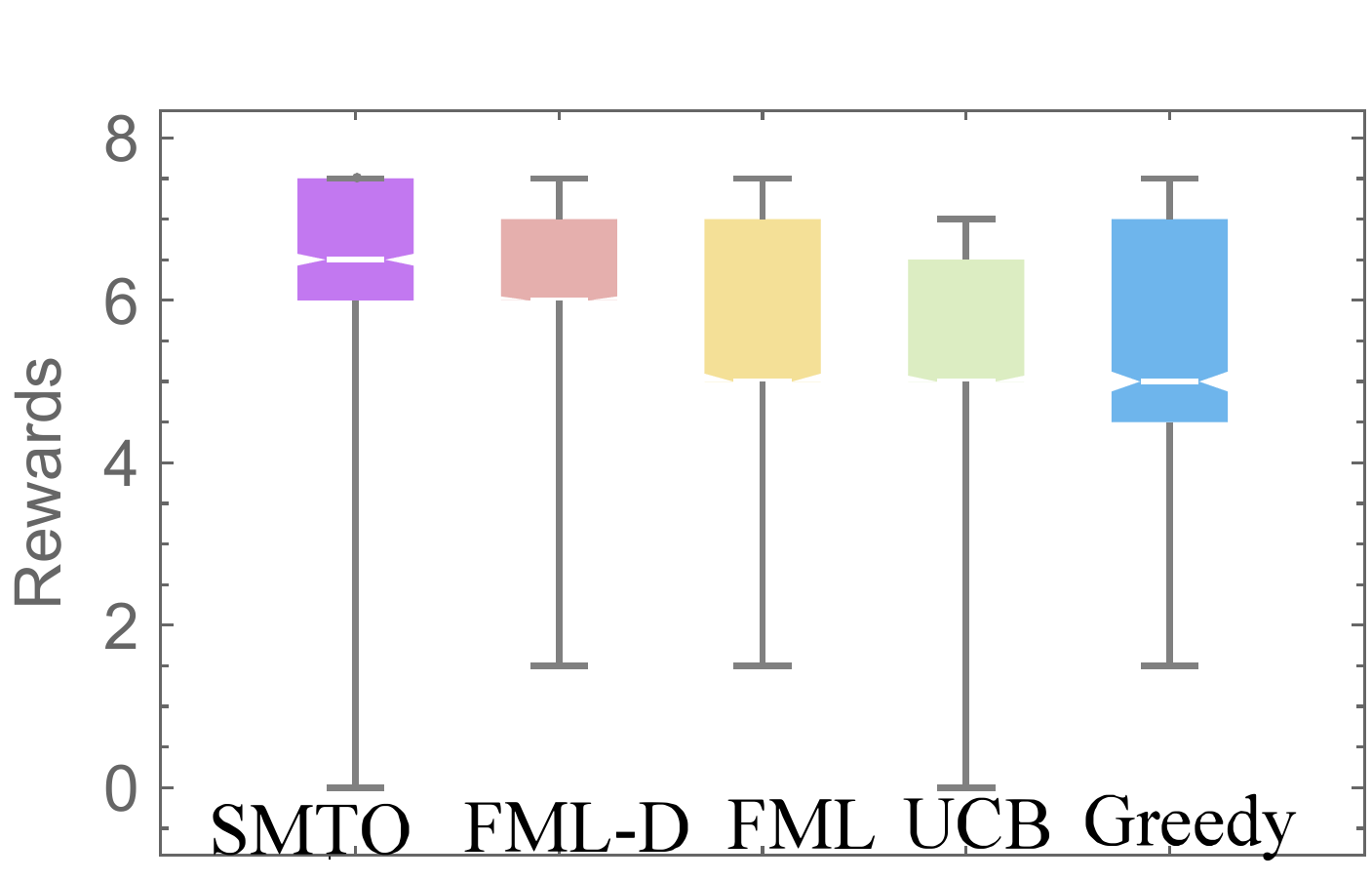}}} \hfill
\caption{Performances of different algorithms.}
\label{fig_sim_times}
\end{figure*}

The numerical simulation of Alg.~\ref{Ag1} is depicted in Fig.~\ref{adsljgabrkjedsfadfbgpvixpu}. Here, we investigate five road segments whose initial road densities are randomly generated from $[0.02,0.1] \ (m^{-1})$. There are 3 vehicular applications to offload. The arrival rate of each application is randomly generated from $[0.2, 1.5]$. $\frac{y_i}{\mu}$ and $z$ are both initialized to $1$. As shown in Fig.~\ref{adsljgabrkjedsfadfbgpvixpu}, each curve represents an asymptotic result of Alg.~{\ref{Ag1}} with a certain $\delta$. Fig.~\ref{adsljgabrkjedsfadfbgpvixpu} demonstrates the optimal safety distance increased with the value of $\delta$. This is because $\delta$ represents the weight of the stability of road traffic in the optimization. According to the optimization (\ref{gfg}), larger $\delta$ is prone to minimize $ \|z - \frac{1}{\rho_i}\|$, where $z = \frac{1}{M}\sum\limits_{i = 1}^{M}{\frac{1}{\rho_i}}$ as $\delta \rightarrow \infty$. In our simulation, $\frac{1}{M}\sum\limits_{i = 1}^{M}{\frac{1}{\rho_i}} = 37.61$. Thus, the optimal safety distance approaches to $37.61$ when $\delta$ becomes large. However, if $\delta$ is small, the algorithm prefers to minimize $s^*$ for road throughput. When $\delta$ is over $40$, the affect of $\frac{1}{2}\sum_{i}^{M} \left\|s^*_i \right\|_2^2 $ can be neglected in Eq.~(\ref{gfg}). The optimization model $P3$ reduces to a simple $\ell_1$-Norm problem, which results in a fast convergence of the curves $\delta = 40$, $\delta = 45$, and $\delta = 50$. Therefore, transportation management server can adapt the different values of $\delta$ to obtain the favourable vehicle string stability and road traffic throughput.



In addition, we investigate our proposed SMTO algorithm in terms of execution time, average offloading delay, acceptance ratio, and rewards, compared with that of the FML algorithm, FML-D algorithm, traditional UCB algorithm, and the Greedy algorithm. FML is proposed by Sim \textit{et al.} \cite{{8472783},{7775114}}, which is a contextual multi-armed bandit online learning algorithm. FML-D is the FML algorithm combined with the upper bound delay information. In the FML-D algorithm, the exploitation process is revised as $b_{i(g+1)} \leftarrow \underset{j \in \mathcal{N}_{i}}{\arg \max } \ \ Q_g(j) +\sqrt{[\tau_k-T_{(ij)k}]^+}$, where $Q_g(t)$ represents the returned reward of completing the application $g$ at time $t$. The UCB algorithm is a classical learning algorithm for multi-armed bandit problems \cite{Auer2002}. However, it does not take the account of upper bound delay $T_{(ij)k}$. Comparing with the SMTO, the Greedy algorithm does not consider the upper bound delay and always stays in the exploitation process.

Due to the limited communication bandwidth, the maximum number of vehicles of the platoon is set to $5$ in our simulation. The leaving probability of each vehicle in the platoon is followed by the experiential distribution, where the average leaving rate is $0.2$. The initial number of vehicles in the platoon is $3$. The computing resource (CPU frequency) of each vehicle is randomly selected from $[2, 10]$ $Mbps$. The sharing communication bandwidth of the platoon is $10$ $Mbps$. The category of applications is set to $5$. The data volume of each application is generated from $[1,5]$ $(Mb)$, uniformly. Each application has own delay requirement that is selected from the range of $[1,3]$ seconds. Moreover, each application has own priority that represents the importance of this category application. When a high priority application has been finished, it feedbacks a high reward. The summation of the feedback rewards is denoted by $Q_g$ in Eq.~(\ref{rrrrrr}). In the simulation, each algorithm implements 1000 times to obtain stable statistical results. The simulation is implemented by Wolfram Mathematica on a laptop with i5-8300h CPU and 16G RAM.

\subsection{Execution Time}

The execution time of different algorithms is illustrated in Fig.~\ref{dsfjhjklhalllokoko}. The execution time is collected from 1000 times offloading process since the SMTO, FML-D, FML, and UCB algorithms need enough simulation time to train. For fairly comparing, the Greedy algorithm also implements 1000 times. The Greedy algorithm has the lowest execution time because of the simplicity. The execution time of the proposed SMTO is lower than the UCB, FML, and FML-D algorithms. Since the collection of the upper bound delay information requires extra time, the FML-D has the highest execution time for decision making. Because of the extra upper bound delay information and simple search structure, SMTO is faster than the UCB, FML, and FML-D upon the exploitation process that results in the low execution time of SMTO.

\subsection{Offloading Delay}

Fig.~\ref{kjhkjhkjasappp} depicts the offloading delay of different algorithms. In our simulation, the offloading delay is composed of the transmission delay and processing delay \cite{Kato8322166}. If an offloading delay excesses the requirement delay of the application, we regard the double times of the requirement delay as the offloading delay. The SMTO algorithm achieves the least offloading delay. Comparing to the other algorithms, the offloading delay of the Greedy algorithm is unstable.

\subsection{Acceptance Ratio}

Next, we investigate the impact of the different algorithms on the acceptance ratio and rewards of the V2V related applications. Fig.~\ref{vvkvkvkvkvkvvsadada} illustrates the box-plot of applications acceptance ratio for different algorithms. The offloading acceptance ratio (AR) is given as
\begin{equation}
AR = \frac{ N^{\mathrm{accept}}}{N^{\mathrm{total}}},\label{e20_1} 
\end{equation}

\noindent where $N^{\mathrm{total}} $ and $N^{\mathrm{accept}} $ represent the number of the total arrived applications and the number of accepted applications, respectively. The y-axis of Fig.~\ref{vvkvkvkvkvkvvsadada} is the acceptance ratio. The expectation of acceptance ratio of SMTO is the largest, and that of Greedy algorithm is the smallest. In addition, acceptance ratio of FML-D, FML, and UCB are very similar.


\subsection{Rewards}

The box-plot of the rewards versus different algorithms is illustrated in Fig.~\ref{dasjhnnn}, where the y-axis is the average rewards. Each category application has its own reward to courage vehicles to complete this category application in time. In general, the rewards of the CACC applications are higher than that of the lane change assist applications. In our simulation, there are $5$ categories applications whose the rewards are assumed as $\{0.5, 1, 1.5, 2, 2.5\}$, respectively. In Fig.~\ref{dasjhnnn}, the average reward of SMTO is the largest. The Greedy algorithm has better rewards than that of the UCB. However, the fluctuation of Greedy algorithm is the largest compared with that of the other four algorithms. The reason is that the Greedy algorithm only invokes a simple maximum rewards exploitation, which is not suitable for the high dynamic vehicular environment.


\section{Conclusion}
In this paper, we analyze the properties of driving safety, vehicle string stability, and road traffic throughput, as well as the relationship between them. Then, the joint optimization of these road metrics is formulated in terms of resource management. It can be found that the vehicle string stability and road traffic throughput are coupled with each other upon the precondition of safe driving. Optimizing one of the road traffic metrics cannot avoid the influence on the other one. In addition, communication bandwidth and on-board computing can be regarded as the control variables for these road traffic metrics from the perspective of resource management. With a given amount of communication and computing resources, the upper bound delay of V2V offloading can be determined based on the NC theory. The obtained upper bound delay is helpful for the transportation planner to improve road traffic performances. As a future work, we will study a comprehensive scheme taking an account of the Cellular-based Vehicle-to-Everything (C-V2X) communication in transportation management.


\appendices

\footnotesize
\bibliography{biblio}

\end{document}

